\documentclass[11pt]{article}

\usepackage[T1]{fontenc}
\usepackage[utf8]{inputenc}
\usepackage[margin=1in]{geometry}
\usepackage{amsmath,mathtools,amsthm}
\usepackage{newtxtext,newtxmath}
\usepackage{microtype}
\allowdisplaybreaks
\usepackage{booktabs,tabularx,array}
\usepackage{authblk}
\usepackage{enumitem}
\usepackage{xcolor}
\usepackage{tikz}
\usetikzlibrary{arrows.meta,positioning,calc,fit,backgrounds,decorations.pathreplacing,shapes.geometric}
\usepackage{caption}
\usepackage{float}
\usepackage[numbers,square]{natbib}
\let\cite\citep

\usepackage[hidelinks,unicode]{hyperref}
\usepackage[nameinlink,noabbrev,capitalise]{cleveref}

\setlist[itemize]{leftmargin=1.55em,itemsep=1.5pt,topsep=3pt}
\setlist[enumerate]{leftmargin=1.8em,itemsep=1.5pt,topsep=3pt}

\newtheorem{theorem}{Theorem}[section]
\newtheorem{lemma}[theorem]{Lemma}
\newtheorem{proposition}[theorem]{Proposition}
\newtheorem{corollary}[theorem]{Corollary}

\theoremstyle{definition}
\newtheorem{definition}[theorem]{Definition}
\theoremstyle{remark}
\newtheorem{remark}[theorem]{Remark}

\crefname{theorem}{Theorem}{Theorems}
\crefname{lemma}{Lemma}{Lemmas}
\crefname{proposition}{Proposition}{Propositions}
\crefname{corollary}{Corollary}{Corollaries}
\crefname{definition}{Definition}{Definitions}
\crefname{observation}{Observation}{Observations}
\crefname{remark}{Remark}{Remarks}
\crefname{section}{Section}{Sections}
\crefname{figure}{Figure}{Figures}
\crefname{table}{Table}{Tables}
\crefname{appendix}{Appendix}{Appendices}

\AddToHook{env/theorem/begin}{\crefalias{theorem}{theorem}}
\AddToHook{env/lemma/begin}{\crefalias{theorem}{lemma}}
\AddToHook{env/proposition/begin}{\crefalias{theorem}{proposition}}
\AddToHook{env/corollary/begin}{\crefalias{theorem}{corollary}}
\AddToHook{env/definition/begin}{\crefalias{theorem}{definition}}
\AddToHook{env/observation/begin}{\crefalias{theorem}{observation}}
\AddToHook{env/remark/begin}{\crefalias{theorem}{remark}}

\definecolor{figblue}{HTML}{0072B2}
\definecolor{figsky}{HTML}{56B4E9}
\definecolor{figgreen}{HTML}{009E73}
\definecolor{figorange}{HTML}{E69F00}
\definecolor{figvermillion}{HTML}{D55E00}
\definecolor{figpurple}{HTML}{CC79A7}
\definecolor{figslate}{HTML}{4D4D4D}
\definecolor{figlightblue}{HTML}{EAF4FB}
\definecolor{figlightgreen}{HTML}{EAF7F2}
\definecolor{figlightorange}{HTML}{FFF4DF}
\definecolor{figlightpurple}{HTML}{F8EDF5}

\newcommand{\ALG}{\operatorname{ALG}}

\newcommand{\one}{\mathbf 1}

\newcommand{\LPex}{\mathrm{ex}_{\mathrm{LP}}}
\newcommand{\vex}[1]{\mathrm{ex}_{#1}}
\newcommand{\reqdeg}[1]{d^{\mathrm{req}}_{#1}}
\newcommand{\Uplinks}{\mathcal U}
\newcommand{\Packing}{\mathcal P}
\newcommand{\PackM}{\mathcal P_M}
\newcommand{\PackH}{\mathcal P_H}
\newcommand{\npacked}{p_M}
\newcommand{\Missing}{m_{\mathrm{miss}}}
\newcommand{\HeavyTree}{T_H}
\newcommand{\TreeMass}{\mu_T}
\newcommand{\PackedMass}{\mu_P}
\newcommand{\OutsideMass}{\mu_{\mathrm{out}}}
\newcommand{\PackSlack}{s_{\mathrm{pack}}}
\newcommand{\localPackSlack}[1]{s^{\mathrm{pack}}_{#1}}
\newcommand{\Saving}{\mathsf{sav}}
\newcommand{\Skeleton}{I}
\newcommand{\FracHeavy}{H_{\mathrm{frac}}}
\newcommand{\ncomp}{c_I}
\newcommand{\nbridges}{b_I}
\newcommand{\SelfMass}{S_{\mathrm{self}}}
\newcommand{\localSelfMass}[1]{s^{\mathrm{self}}_{#1}}
\newcommand{\GateCopies}{\mathcal C_{\mathrm{gate}}}
\newcommand{\GateResidual}{R_{\mathrm{gate}}}
\newcommand{\GateOver}{O_{\mathrm{gate}}}

\tikzset{
  vnode/.style={circle,draw=figslate,fill=white,minimum size=5.2mm,inner sep=0pt},
  lightedge/.style={draw=figgreen,double=white,double distance=1.0pt,line width=1.8pt},
  heavyedge/.style={draw=figslate,line width=1.15pt},
  backedge/.style={draw=figvermillion,dashed,line width=1.1pt,-{Latex[length=2mm]}},
  packingedge/.style={draw=figblue,line width=2.6pt},
  gateedge/.style={draw=figorange,densely dotted,line width=2.25pt},
  flowbox/.style={draw=figblue,rounded corners=1mm,fill=figlightblue,align=center,minimum height=9mm,minimum width=31mm,font=\small},
  toolbox/.style={draw=figgreen,rounded corners=1mm,fill=figlightgreen,align=center,minimum height=9mm,minimum width=31mm,font=\small},
  ideabox/.style={draw=figorange,rounded corners=1mm,fill=figlightorange,align=center,minimum height=9mm,minimum width=35mm,font=\small},
  arr/.style={-{Latex[length=2.1mm]},line width=.9pt,draw=figslate}
}

\hypersetup{
  pdftitle={A 59/33 Cut-LP Guarantee for Matching Augmentation},
  pdfsubject={Approximation with respect to the standard cut relaxation for matching augmentation, with a minimum-value forest augmentation corollary},
  pdfauthor={Morteza Alimi and Tobias M\"omke},
  pdfkeywords={approximation algorithms; survivable network design; matching augmentation; forest augmentation; path augmentation; 2-edge connectivity; LP rounding; depth-first search; cut dimension},
  pdflang={en-US},
  pdfdisplaydoctitle=true
}

\title{
A $59/33$ Cut-LP Guarantee for Matching Augmentation
}
\author[1]{Morteza Alimi}
\author[1]{Tobias M\"omke}
\affil[1]{Institute of Computer Science, University of Augsburg, Germany\\
\texttt{morteza.alimi@uni-a.de}\quad\texttt{moemke@informatik.uni-augsburg.de}}
\date{}

\begin{document}
\maketitle

\begin{abstract}
The Matching Augmentation Problem (MAP) asks for a minimum-cardinality set of unit-cost edges that, together with a zero-cost matching, forms a 2-edge-connected spanning multigraph.  We study the standard cut relaxation.  Bamas, Drygala, and Svensson proposed a particularly simple LP-guided algorithm: compute an extreme optimum, run a depth-first search that prioritizes large LP coordinates, and augment the resulting DFS tree optimally.

We give a new structural analysis of
the Bamas–Drygala–Svensson LP-guided DFS algorithm. The analysis combines an exact primal–dual identity
for the residual uplink problem with a rank bound that measures fractional support relative to the unit-valued skeleton.
The result is that for every root and every deterministic tie-breaking order consistent with the LP priorities, the algorithm returns a solution of cost at most
\[
 \frac{59}{33}\,c(x^*)-\frac{25}{33}
 =\left(2-\frac7{33}\right)c(x^*)-\frac{25}{33}
 \approx1.788\,c(x^*)-0.758,
\]
where $x^*$ is an optimum of the cut LP.  Consequently, the integrality gap of the relaxation is at most $59/33\approx1.788$.  

No new algorithmic step is required; the improvement is analytical.  The exact packing certificate for the residual uplink problem yields a cost identity with a packing-slack term, while a rank theorem bounds fractional support relative to the unit-valued skeleton.  A regional classification accounts for the non-tree edges, and a two-cut identity handles self-holes.  As a direct corollary, the same $59/33\approx1.788$ bound holds for Forest Augmentation in the minimum-value regime.  The proof is self-contained apart from one theorem on the dimension of minimum-cut vectors.
\end{abstract}

\noindent\textbf{Keywords.} approximation algorithms; survivable network design; matching augmentation; forest augmentation; path augmentation; 2-edge connectivity; LP rounding; depth-first search; cut dimension.

\section{Introduction}\label{sec:intro}

The \emph{Matching Augmentation Problem} (MAP) is the following special case of minimum-cost 2-edge-connected spanning subgraph.  The input is a finite loopless multigraph $G=(V,E)$ with edge costs in $\{0,1\}$, and the zero-cost edges form a matching $M$.  The objective is to select as few unit-cost edge copies as possible so that, together with the free matching edges, they form a 2-edge-connected spanning submultigraph.

The standard cut relaxation is the usual LP lower bound for MAP.  \citet{BDS} proposed an especially concise LP-based algorithm: compute an optimal extreme point, use its coordinates to prioritize a depth-first search, and augment the resulting DFS tree optimally by residual uplinks.  With their fixed choice $\gamma=10^{-3}$, the bound in \citet[Section~2.2, Eq.~(2)]{BDS} is roughly $2-9.99\cdot10^{-10}$ against the cut LP; \citet[Chapter~4]{AzevedoThesis} later optimized the same analysis to roughly $2-3.92\cdot10^{-8}$.  We show that the same three-step algorithm admits a substantially sharper analysis.

\paragraph{Main result.}
For every DFS root and every deterministic tie-breaking order consistent with the LP priorities, the algorithm returns a 2-edge-connected spanning submultigraph satisfying
\begin{equation}\label{eq:intro-main}
 \ALG\le \frac{59}{33}\,c(x^*)-\frac{25}{33}
 \approx1.788\,c(x^*)-0.758.
\end{equation}
Thus the integrality gap of the standard cut LP is at most $59/33\approx1.788$.  The algorithm is unchanged; the gain comes from retaining and combining information that the previous analysis discarded.

\begin{theorem}[main theorem]\label{thm:main}
For every feasible finite loopless MAP multigraph with at least two vertices, for every DFS root, and for every fixed deterministic tie-breaking order that respects the priorities in \cref{alg:dfs}, the LP-guided DFS algorithm returns, in polynomial time, a 2-edge-connected spanning submultigraph satisfying
\[
 \ALG\le \frac{59}{33}\,c(x^*)-\frac{25}{33}
 \approx1.788\,c(x^*)-0.758,
\]
where $x^*$ is an optimum of the standard cut relaxation.  Consequently, the integrality gap of the relaxation is at most $59/33\approx1.788$.
\end{theorem}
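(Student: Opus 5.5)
The plan follows the abstract's outline and assembles four ingredients into a single inequality.

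\emph{Step 1: exact cost of the augmentation.} Fix the extreme optimum $x^*$ and let $T$ be the DFS tree produced by \cref{alg:dfs}. The algorithm pays for the non-matching tree edges plus an optimal set of residual uplinks covering every tree edge not already covered by a back edge. Since $T$ is a DFS tree, every non-tree edge is a back edge, so the uplink-covering problem lives on a laminar (path-like) family, and its LP is integral. I would write down the covering LP and its dual, which is a packing of tree edges that no single uplink covers twice. Complementary slackness then gives an exact identity
\[
\ALG = (n-1-|M\cap T|) + \npacked,
\]
where $\npacked$ is the size of a maximum packing $\PackM$. I would then compare $\npacked$ with the LP mass $x^*$ places on uplinks crossing the packed tree cuts. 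Each packed tree edge $e$ has $x^*(\delta(S_e))\ge 2$, where $S_e$ is the subtree below $e$, and by the DFS priority rule the tree edge itself carries the largest available coordinate. This yields $\ALG\le c(x^*)+\npacked-\PackSlack$ for an explicit nonnegative packing-slack term.

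\emph{Step 2: charging the tree against the LP.} Let the skeleton $\Skeleton$ be the support of unit-valued coordinates, with $\ncomp$ components and $\nbridges$ bridges. The DFS follows unit edges first, so $T$ uses $\Skeleton$ maximally. The remaining tree edges must be paid for by fractional mass. Here I invoke the cut-dimension theorem: an extreme point is determined by tight cuts of value $2$, so the number of fractional coordinates is at most the rank of tight min-cut vectors. Measured relative to $\Skeleton$, this bounds the fractional support by a linear function of $\ncomp$ and $\nbridges$. Combined with each fractional edge having value at most $1$, this should give a lower bound of $c(x^*)$ in terms of $|T\setminus M|$, with the constant term $-25/33$ coming from the root and the global cut $\delta(\{r\})$.

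\emph{Step 3: regional classification and self-holes.} Each packed tree edge $e\in\PackM$ is classified by where the LP mass of $\delta(S_e)$ lies: on heavy tree edges $\HeavyTree$, on other packed cuts ($\PackedMass$), or outside ($\OutsideMass$). In the last case the mass lies on back edges whose endpoints sit in disjoint DFS regions, so the mass is charged disjointly. The delicate case is a self-hole, where the mass of $\delta(S_e)$ stays inside the subtree. For this case I use a two-cut identity
\[
x^*(\delta(A))+x^*(\delta(B))=x^*(\delta(A\cap B))+x^*(\delta(A\cup B))+2x^*(E(A\setminus B,B\setminus A)),
\]
applied to nested tight cuts, to extract a local saving $\localSelfMass{e}$.

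\emph{Step 4: the final combination.} I expect this step to be the main obstacle. The goal is a convex combination of the three inequalities from Steps 1--3, namely the cost identity, the rank bound, and the regional and self-hole charging, with weights chosen so that every region type is paid at rate at most $59/33$. I would set this up as a small LP over the region parameters (fractions of heavy, packed, outside, and self mass) and find the worst-case region mix. Weights with denominator $33$ should emerge as the optimal multipliers. The hardest part is verifying that the self-hole saving and the rank bound are not double-counting the same fractional mass. I would first try assigning each fractional edge to its lowest tight cut to make the charging disjoint. Finally, polynomial running time is immediate, and the integrality-gap statement follows because $\ALG$ is integral and bounded by $\tfrac{59}{33}c(x^*)$.
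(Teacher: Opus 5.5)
There is a genuine gap. Apart from the cover--packing duality at the start of Step~1, the proposal never reaches the inequalities that drive the bound.

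In $\ALG=(m+u-1)+|\Packing|$, the packing ranges over all tree edges, heavy ones included. Your bound $\ALG\le c(x^*)+\npacked-\PackSlack$ is then either empty or false. If $\npacked$ means $|\Packing|$, the bound is equivalent to the slack being at most $1+\LPex=c(x^*)-(m+u-1)$, and it gives no control of $|\Packing|$. If $\npacked=|\PackM|$, then for $m=0$ it claims $\ALG\le c(x^*)$. That fails on the Petersen graph: the LP value is $10$, but there is no Hamiltonian cycle, so every solution has at least $11$ edges.

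What the packing actually provides is that the bundles $\Uplinks(f)$, $f\in\Packing$, are pairwise disjoint, and each has mass at least $2-x^*_f$ by its fundamental cut. This gives $2|\Packing|=c(x^*)+\npacked-\OutsideMass-\PackSlack$ and hence
\[
 \ALG=2m+\tfrac32u+\tfrac12\LPex-1-\tfrac12\bigl(\Missing+\OutsideMass+\PackSlack\bigr).
\]
Everything therefore reduces to a lower bound on $\Saving=\Missing+\OutsideMass+\PackSlack$; the paper proves $33\Saving\ge14m-17u-81\LPex-16$.

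Step~2 also aims at the wrong target. A lower bound on $c(x^*)$ in terms of $|T\setminus M|$ is trivial, since $c(x^*)\ge m+u$ by the degree constraints. The rank bound $|\FracHeavy|\le\nbridges+(2\ncomp-3)_+$ is needed instead for the support-volume estimate $\sum_{\ell\in\mathcal A}(1-x^*_\ell)\le m+\TreeMass$ over sets $\mathcal A$ of uplinks. That estimate controls the residual of many small non-tree copies.

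The missing core is the charging of each packed matching bundle $\Uplinks(e)$, $e\in\PackM$, which has mass at least one. The classification that works is by the first tree edge below the top endpoint $a$ of each uplink $\ell$, not by where the mass of $\delta(T_{c_e})$ sits.
\begin{itemize}
\item If that edge is $e$, then $\ell$ is a self-hole.
\item If it is another matching edge $f$, then $f\notin\Packing$. The copy $\ell$ is charged to the heavy degree $1+\vex{p_f}$ at the upper end of $f$, for a total of at most $\Missing+2\LPex$.
\item If it is a heavy edge $g$, the max-value DFS rule gives $x^*_g\ge x^*_\ell$, and $g\notin\Packing$, so gate mass lies in $\OutsideMass$. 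This, not Step~1, is where the priority rule is used.
\end{itemize}
The gated mass $C_a$ at $a$ is then bounded using the degree budget $C_a+G_a\le1+H_a$, where $H_a=\one_{\{a\text{ unmatched}\}}+\vex{a}$. The elementary inequality $(C_a-2G_a)_+\le\frac18R_a+H_a$ then applies, with $\sum_aR_a\le m+\TreeMass$ by support volume and $\sum_aH_a\le u+2\LPex$. Combined with $\TreeMass\le\frac12(\Missing+u+\LPex+\OutsideMass)$, this gives $14m\le33(\Missing+\OutsideMass)+16\SelfMass+17u+65\LPex$.

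For self-holes, your identity applied to nested sets is vacuous: for $A\subseteq B$ both sides coincide term by term. The cuts involved also need not be tight. Instead, apply the disjoint-set identity to $\{p_e\}$ and $T_{c_e}$. Their cuts have values $2+\vex{p_e}$ and $2+\localPackSlack{e}$, with mass $1+\localSelfMass{e}$ between them. Feasibility of $\delta(\{p_e\}\cup T_{c_e})$ then gives $2\localSelfMass{e}\le\vex{p_e}+\localPackSlack{e}$, except for at most one root edge with $\{p_e\}\cup T_{c_e}=V$. Hence $\SelfMass\le1+\LPex+\frac12\PackSlack$. That root exception, not $\delta(\{r\})$, produces the constant, since $-1+\frac{16}{66}=-\frac{25}{33}$.

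After this, Step~4 is a direct linear substitution rather than an LP over region mixes. No disjointness between the charges is needed: each is a separate valid inequality. The repeated uses of $\LPex$ and $\PackSlack$ are absorbed by the final coefficients, with $57/33$ on $\LPex$ and $25\PackSlack$ simply discarded.
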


The algorithm is a support-restricted implementation of the algorithm of \citet{BDS}: DFS is run on the positive support of $x^*$, and the augmentation step uses only positive non-tree support edges.  This restriction preserves feasibility and can only make the computed augmentation more expensive than allowing all original edges, so the theorem also applies to the unrestricted implementation.

The theorem also yields a direct corollary for Forest Augmentation.  If the zero-cost forest of an instance with at least two vertices has $k$ components, then the cut LP has value at least $k$.  Equality forces every free component to be a path and excludes positive-support heavy edges at internal path vertices.  Suppressing the free paths therefore produces a MAP instance with the same LP value, giving the same $59/33\approx1.788$ guarantee; see \cref{cor:fap-face}.

\subsection{Proof architecture}

The proof of \eqref{eq:intro-main} uses four ingredients, summarized in \cref{fig:architecture}.

\paragraph{1. Exact residual duality and packing slack.}
After the priority DFS, every non-tree support edge is an ancestor--descendant uplink.  Minimum uplink cover has an exact packing dual.  For a maximum residual packing $\Packing$, let $\Missing$ count matching edges missing from the packing, let $\OutsideMass$ be heavy DFS-tree LP mass outside the packing, and let $\PackSlack$ be the non-tree LP mass remaining after the packed fundamental-cut requirements are met.  \Cref{lem:exact-identities} gives the exact identity
\begin{equation}\label{eq:intro-exact}
 \ALG
 =2m+\frac32u+\frac12\LPex-1
 -\frac12\bigl(\Missing+\OutsideMass+\PackSlack\bigr),
\end{equation}
where $m=|M|$, $u=|V|-2m$ is the number of unmatched vertices, and $\LPex=c(x^*)-(m+u)$ is the LP excess.

\paragraph{2. Fractional support relative to the integral skeleton.}
Let $\Skeleton$ be the graph of unit-valued coordinates of $x^*$ and let $\FracHeavy$ be the fractional heavy support.
For a value $t$ let $(t)_+ := \max\{0,t\}$.
If $\Skeleton$ has $\ncomp$ connected components and $\nbridges$ bridges, then
\begin{equation}\label{eq:intro-rank}
 |\FracHeavy|\le \nbridges+(2\ncomp-3)_+.
\end{equation}
The $2\ncomp-3$ term in \eqref{eq:intro-rank} follows from \cite{LLSZ} (see \cref{thm:mincut-dimension}); tight cuts crossing a fixed integral bridge contribute only one additional quotient dimension.  The resulting support-volume inequality bounds the aggregate residual of many small non-tree copies.

\paragraph{3. A regional trichotomy.}
Every positive copy in the bundle of a packed matching edge falls into exactly one of three cases: a \emph{self-hole}, a \emph{foreign hole}, or a \emph{gated copy}.  Foreign holes are paid by missing matching edges and singleton excess.  Each gated copy is dominated by the heavy DFS branch through which it enters, and local gate overload is paid by gate mass plus fractional support volume.  The resulting regional ledger is
\begin{equation}\label{eq:intro-regional}
 14m\le33(\Missing+\OutsideMass)+16\SelfMass+17u+65\LPex,
\end{equation}
where $\SelfMass$ is the total self-hole mass.

\paragraph{4. Self-holes collapse under a two-cut identity.}
For a packed matching edge $e=p_e c_e$, let $T_{c_e}$ be the DFS subtree below $e$, let $\localSelfMass{e}$ be its self-hole mass, let $\vex{p_e}$ be the singleton excess at its parent, and let $\localPackSlack{e}$ be the slack in its packed fundamental cut.  The two-cut identity gives
\begin{equation}\label{eq:intro-selfhole}
 x^*\!\left(\delta(\{p_e\}\cup T_{c_e})\right)
 =2+\vex{p_e}+\localPackSlack{e}-2\localSelfMass{e}.
\end{equation}
For every proper such union, the cut value in \eqref{eq:intro-selfhole} is at least two, and there is at most one root exception.  Summing yields
\[
 \SelfMass\le1+\LPex+\frac{\PackSlack}{2}.
\]
Substituting this inequality into \eqref{eq:intro-regional} and then into \eqref{eq:intro-exact} proves \cref{thm:main}.

\begin{figure}[t]
\centering
\begin{tikzpicture}[node distance=8mm and 9mm]
\node[flowbox] (lp) {normalized extreme\\optimum $x^*$};
\node[flowbox,right=of lp] (dfs) {LP-guided\\depth-first search};
\node[flowbox,right=of dfs] (up) {minimum residual\\uplink cover};
\node[toolbox,below=of lp] (slack) {exact packing dual\\and slack identity};
\node[toolbox,below=of dfs] (rank) {integral skeleton\\and support rank};
\node[toolbox,below=of up] (region) {regional trichotomy\\and branch gates};
\node[ideabox,below=9mm of rank] (cut) {two-cut identity for self-holes\\$\Longrightarrow\ 59/33\approx1.788$};
\draw[arr] (lp)--(dfs); \draw[arr] (dfs)--(up);
\draw[arr] (lp)--(slack); \draw[arr] (dfs)--(rank); \draw[arr] (up)--(region);
\draw[arr] (slack)--(rank); \draw[arr] (rank)--(region);
\draw[arr] (slack)--(cut); \draw[arr] (region)--(cut);
\end{tikzpicture}
\caption{The algorithm has three steps; the analysis uses three certificates and one cut identity.  Colors distinguish algorithmic objects from proof objects.
}
\label{fig:architecture}
\end{figure}

\paragraph{Simplicity and scope.}
The algorithm consists of one LP computation, one priority DFS, and one greedy exact augmentation.  The proof imports only the minimum-cut dimension theorem, stated as \cref{thm:mincut-dimension}; the remaining arguments are exact packing duality, elementary cut identities, an extreme-point rank argument, and local charging.  The factor-revealing local inequality is elementary.  The implementation remains the short LP--DFS--uplink pipeline, while the analysis proves a statement about the standard relaxation.

\paragraph{Organization.}
\Cref{sec:related} surveys and positions the result.  \Cref{sec:algorithm} defines the algorithm and proves exact uplink cover--packing duality.  \Cref{sec:slack} derives the exact packing-slack identity.  \Cref{sec:rank} proves the integral-skeleton rank theorem and its support-volume consequence.  \Cref{sec:regional} develops the three regional channels and the deficit bound.  \Cref{sec:selfhole} proves the self-hole theorem, and \cref{sec:main-proof} completes the approximation analysis.  \Cref{sec:fap-face} derives the minimum-value Forest Augmentation corollary.  \Cref{sec:discussion} discusses the scope of the analysis.  \Cref{app:technical} collects multigraph and boundary-case details.

\section{Related work and positioning}\label{sec:related}

\paragraph{MAP against the integral optimum.}
The first ratio below two for MAP was $7/4=1.750$ by \citet{Cheriyan7over4}, based on a reduction to well-structured instances and ear augmentation.  \citet{Cheriyan5over3} improved this to $5/3\approx1.667$ using new local augmentation techniques.  The best established guarantee is $13/8=1.625$ by \citet{Garg13over8}, whose central ingredients are an approximation-preserving structural reduction and repeated simultaneous contractions. While this is numerically smaller than $59/33\approx1.788$, it does not imply a guarantee against the value of the standard cut relaxation.

\paragraph{MAP against the cut LP.}
\citet{BDS} introduced the LP-guided DFS algorithm studied here and proved both a better-than-two upper bound and a $4/3\approx1.333$ integrality-gap lower bound for the standard cut relaxation.  With their fixed choice $\gamma=10^{-3}$, the bound in Section~2.2, Eq.~(2), is about $2-9.99\cdot10^{-10}\approx1.999999999$.  \citet[Chapter~4]{AzevedoThesis} optimized the same parameterized analysis to about $2-3.92\cdot10^{-8}\approx1.999999961$.  The present $59/33\approx1.788$ guarantee is obtained by a different analysis of the same algorithm, rather than by retuning those parameters.  \Cref{tab:prior-work} summarizes these guarantees.

\begin{table}[t]
\centering
\small
\caption{Selected MAP guarantees.  The decimal column makes the numerical ordering explicit; the last column records the benchmark, because ratios against the integral optimum and against the cut LP answer different questions.}
\label{tab:prior-work}
\begin{tabularx}{\textwidth}{@{}>{\raggedright\arraybackslash}p{.25\textwidth}>{\centering\arraybackslash}p{.15\textwidth}>{\centering\arraybackslash}p{.16\textwidth}>{\raggedright\arraybackslash}X@{}}
\toprule
Result & Guarantee & Rough decimal & Benchmark and principal ingredient\\
\midrule
\citet{Cheriyan7over4} & $7/4$ & $1.750$ & integral optimum; structural reduction and ear augmentation\\
\citet{Cheriyan5over3} & $5/3$ & $1.667$ & integral optimum; local augmentation\\
\citet{Garg13over8} & $13/8$ & $1.625$ & integral optimum; simultaneous contractions\\
\citet{BDS}, fixed parameters & $2-9.99\cdot10^{-10}$ & $1.999999999$ & cut LP; LP-guided DFS and exact tree augmentation\\
\citet{AzevedoThesis} & $2-3.92\cdot10^{-8}$ & $1.999999961$ & cut LP; optimized analysis of the same algorithm\\
This paper & $59/33$ & $1.788$ & cut LP; exact residual accounting and support rank\\
\bottomrule
\end{tabularx}
\end{table}

\paragraph{Forest and path augmentation.}
For the broader Forest Augmentation Problem, \citet{GrandoniFAP} first broke the factor-two barrier, obtaining $1.9973$ for FAP through a reduction to Path Augmentation and a $1.9913$ PAP algorithm.  \citet{HommelsheimFAP} subsequently obtained $1.9412$ for PAP and $1.9955$ for FAP using a new relaxation and a stronger reduction to structured instances.  These guarantees are measured against the integral optimum and rely on global structural transformations.  Our FAP corollary concerns the narrower minimum-LP-value face, where equality in the elementary lower bound forces the free forest itself to be a path forest and permits an exact suppression to MAP.

\paragraph{Broader augmentation context.}
MAP is a special $0/1$-cost 2-edge-connected spanning-subgraph problem, whereas Tree and Connectivity Augmentation start from a connected free subgraph and add links.  \citet{CecchettoCAP} developed unified methods that transfer ideas between those connected-base models.  In our setting, the priority DFS creates a rooted uplink residual instance, so the augmentation itself is exactly solvable; the novelty lies in relating its exact packing dual back to the original cut LP.

\paragraph{Technical ingredients.}
The residual rooted uplink problem is a special case of the graph-augmentation setting studied by Frederickson and JáJá~\citep{FredericksonJaja}; the precise cover--packing relation used here is proved directly in \cref{prop:minmax}.  Our polyhedral argument uses the minimum-cut dimension theorem of \citet{LLSZ}, stated as \cref{thm:mincut-dimension}.  The new work lies in connecting these two ingredients to the unit-valued skeleton of an extreme MAP cut LP solution and to the regional accounting of the LP-guided DFS.

\section{The cut LP and the LP-guided DFS algorithm}\label{sec:algorithm}

Let $G=(V,E)$ be a finite loopless multigraph with $|V|\ge2$.  The zero-cost edges form a matching $M$: they are pairwise vertex-disjoint and there is at most one zero-cost edge on each matched pair.  Every edge in $E\setminus M$ has unit cost.  We call the matching edges \emph{light} and the remaining edges \emph{heavy}.  Parallel heavy copies are encoded explicitly as separate edges and separate LP coordinates.  For a nonempty proper set $S\subsetneq V$, let $\delta(S)$ be the multiset of edges with exactly one endpoint in $S$, and write $x(F)=\sum_{e\in F}x_e$.

The standard cut relaxation is
\begin{equation}\label{lp:map}
\begin{aligned}
 \min\quad & c(x):=\sum_{e\in E\setminus M}x_e\\
 \text{s.t.}\quad & x(\delta(S))\ge2 &&(\emptyset\ne S\subsetneq V),\\
 &0\le x_e\le1 &&(e\in E).
\end{aligned}
\end{equation}
Every 2-edge-connected spanning submultigraph gives a feasible integral solution.  The relaxation can be solved in polynomial time by minimum-cut separation.

Because matching coordinates have zero objective coefficient, increasing any of them preserves feasibility and objective value.  The rank argument later needs the normalized vector itself to be extreme.

\begin{lemma}[extreme normalization]\label{lem:normalization}
There is an optimal extreme point $x^*$ of \eqref{lp:map} satisfying
\[
 x^*_e=1\qquad(e\in M).
\]
\end{lemma}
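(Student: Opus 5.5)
The plan is to restrict to the face where all matching coordinates equal one. Let $P$ denote the feasible polytope of \eqref{lp:map}. The key step is to show that $F:=P\cap\{x:x_e=1\ (e\in M)\}$ contains an optimal solution, and then use that a face of a polytope has vertices which are vertices of the polytope.

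First, $P$ is nonempty and bounded, since $0\le x\le 1$. Feasibility of the instance is assumed implicitly; otherwise there is nothing to prove, and the main theorem assumes a feasible instance. So \eqref{lp:map} has an optimum $\bar x$. Define $\hat x$ by setting $\hat x_e=1$ for $e\in M$ and $\hat x_e=\bar x_e$ otherwise. Every cut value $\hat x(\delta(S))\ge\bar x(\delta(S))\ge2$, and the bounds hold, so $\hat x\in P$. Since matching edges have zero cost, $c(\hat x)=c(\bar x)$, and hence $\hat x$ is optimal and lies in $F$.

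Next, $F$ is a face of $P$: each constraint $x_e\le1$ is valid for $P$, so $F$ is the intersection of faces $\{x\in P: x_e=1\}$, hence a face. It is nonempty because $\hat x\in F$. Let $F^*$ be the set of optimal points of $P$ lying in $F$, equivalently $F\cap\{x:c(x)=c(\bar x)\}$. This is again a face of $P$, because $c(x)\ge c(\bar x)$ is valid for $P$, and it is nonempty and bounded. A nonempty polytope has a vertex, so $F^*$ has a vertex $x^*$. A vertex of a face of $P$ is a vertex of $P$: if $x^*=\tfrac12(y+z)$ with $y,z\in P$, then the face property forces $y,z\in F^*$, so $y=z=x^*$. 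Therefore $x^*$ is an extreme point of $P$, it is optimal, and $x^*_e=1$ for all $e\in M$.

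There is essentially no obstacle. The only point needing care is the face argument: that a vertex of the optimal face intersected with $\{x_e=1\}$ is extreme in $P$ itself. This follows from the standard characterization of faces as sets where a valid inequality is tight, which is what makes $F^*$ closed under the convex decompositions used above.
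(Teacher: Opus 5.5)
Your proof is correct and follows the paper's argument. You raise the matching coordinates of an optimum to one, which shows that the optimal face meets $\{x: x_e=1,\ e\in M\}$, and then take a vertex of this intersection of faces, which is a vertex of the whole polytope. You spell out the face and vertex facts more explicitly than the paper does. The paper's extra remark on computing such a point in polynomial time is not part of the lemma as stated.
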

\begin{proof}
The intersection of the optimal face with the face determined by $x_e=1$ for all $e\in M$ is nonempty: starting from any optimum, increasing a matching coordinate preserves every cut constraint and does not change the objective.  Choose a vertex of this intersection.  An intersection of faces is a face of the original polytope, and every vertex of a face is a vertex of the original polytope.

For algorithmic purposes, fix $x_M=\one$, solve the resulting heavy-coordinate LP by minimum-cut separation, and choose a basic optimum by polynomially many lexicographic optimizations over the optimal face, using the same separation oracle.
\end{proof}

Fix such a point.  Set
\[
 m:=|M|,\qquad u:=|V|-2m,\qquad n:=|V|=2m+u,
\]
and write
\[
 c^*:=c(x^*)=m+u+\LPex.
\]
The LP excess $\LPex$ is nonnegative: summing the singleton constraints gives $x^*(E)\ge n$, while $x^*(M)=m$.

For a vertex $v$, define the required heavy degree
\[
 \reqdeg{v}:=\begin{cases}
 1,&v\text{ is matched},\\
 2,&v\text{ is unmatched},
 \end{cases}
\]
and the singleton excess
\[
 \vex{v}:=x^*(\delta(v)\setminus M)-\reqdeg{v}\ge0.
\]
The total heavy degree is
\[
 \sum_{v\in V}x^*(\delta(v)\setminus M)=2c^*.
\]
Consequently,
\begin{equation}\label{eq:eps-budget}
 \sum_{v\in V}\vex{v}=2\LPex.
\end{equation}

The notation used throughout the analysis is summarized in \cref{tab:notation}.

\begin{table}[t]
\centering
\small
\caption{Mnemonic notation used throughout the analysis.}
\label{tab:notation}
\begin{tabularx}{\textwidth}{@{}>{\raggedright\arraybackslash}p{.24\textwidth}X@{}}
\toprule
Symbol & Meaning\\
\midrule
$m,u$, and $n=2m+u$ & number of matching edges, unmatched vertices, and total vertices\\
$\LPex,\vex{v}$ & global LP excess and singleton excess at vertex $v$\\
$\Uplinks,\Uplinks(f)$ & positive non-tree uplinks and the bundle covering tree edge $f$\\
$\Packing,\PackM,\PackH$ & maximum packing and its matching/heavy parts\\
$\Missing,\OutsideMass,\PackSlack$ & missing matching edges, outside tree mass, and packing slack\\
$\Saving=\Missing+\OutsideMass+\PackSlack$ & exact saving subtracted from the baseline cost\\
$\Skeleton,\FracHeavy,\ncomp,\nbridges$ & integral skeleton, fractional heavy support, and the skeleton's component and bridge counts\\
$\SelfMass,\GateOver$ & total self-hole mass and total gate overload\\
\bottomrule
\end{tabularx}
\end{table}

\begin{definition}[LP-guided DFS algorithm]\label{alg:dfs}
\leavevmode
\begin{enumerate}[label=\arabic*.,leftmargin=7mm]
\item Compute an optimal extreme point $x^*$ as in \cref{lem:normalization}.
\item Fix an arbitrary root and an arbitrary deterministic tie-breaking order.  In the support graph
\[
 G^+:=\bigl(V,\{e:x^*_e>0\}\bigr),
\]
run DFS.  If the current vertex has an edge of $M$ to an unvisited vertex, DFS takes that edge; otherwise it takes an eligible heavy edge of maximum $x^*$-value, using that order to break ties.  Let $T$ be the DFS tree.
\item Compute a minimum-cardinality augmentation of $T$ whose links are positive non-tree support copies, and return its union with $T$.
\end{enumerate}
\end{definition}

The support graph is connected because every proper cut has positive $x^*$-mass.  All matching edges belong to $T$: when one endpoint is first discovered, its mate is still unvisited, so the priority rule selects the matching edge.  Thus $T$ contains exactly
\[
 n-1-m=m+u-1
\]
heavy edges.

Every non-tree edge of an undirected DFS joins an ancestor and a descendant.  Root $T$ at the DFS root, orient each non-tree support edge from its descendant \emph{source} to its ancestor \emph{top endpoint}, and call it an \emph{uplink}.  Let
\[
 \Uplinks:=E(G^+)\setminus E(T)
\]
be the heavy non-tree support edges.  An uplink covers the tree path between its endpoints.  For a tree edge $f$, let
\[
 \Uplinks(f):=\{\ell\in \Uplinks:f\text{ lies on the tree path covered by }\ell\}.
\]
The fundamental cut of $f$ contains no other tree edge, so
\begin{equation}\label{eq:fundamental}
 x^*(\Uplinks(f))\ge2-x^*_f\ge1,
\end{equation}
where the last inequality uses $x^*_f\le1$.  Hence the support-restricted residual instance is feasible.

\subsection{Exact uplink cover--packing duality}

A set $\Packing\subseteq E(T)$ is an \emph{uplink packing} if no uplink covers two edges of $\Packing$.  
The residual rooted uplink augmentation problem is a special case of the graph-augmentation setting studied by Frederickson and JáJá~\citep{FredericksonJaja}.
\Cref{fig:packing} illustrates an uplink packing and the corresponding disjoint packing bundles.

\begin{proposition}[uplink cover--packing theorem]\label{prop:minmax}
For a rooted tree and a set of uplinks that together cover all tree edges, the minimum number of uplinks covering all tree edges equals the maximum cardinality of an uplink packing.  Both objects can be found in polynomial time.
\end{proposition}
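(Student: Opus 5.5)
Weak duality comes first and is immediate: if $\Packing$ is an uplink packing and $C$ is a cover, then each edge of $\Packing$ is covered by some uplink of $C$, and no uplink covers two edges of $\Packing$. So $|C|\ge|\Packing|$. The substance is the reverse inequality, which I will prove constructively with a greedy procedure that produces a cover $C$ and a packing $\Packing$ with $|C|=|\Packing|$. The same procedure also gives the polynomial-time claim.

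The greedy runs as follows. Root the tree at $r$. Repeat until every tree edge is covered by the current $C$. Among uncovered tree edges, choose an edge $f=vw$, with $w$ the child, whose lower endpoint $w$ is deepest. Add $f$ to $\Packing$. Among all uplinks covering $f$ (a nonempty set by hypothesis), add to $C$ one whose top endpoint is highest, that is, of minimum depth.

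Since each iteration adds one element to each set, $|C|=|\Packing|$. The output $C$ is a cover by construction. The work is to show that $\Packing$ is a packing, and I will argue by contradiction. Suppose an uplink $\ell$ covers two packed edges $f_1$ and $f_2$, chosen in iterations $i<j$. The tree path covered by $\ell$ is a vertical path, so $f_1$ and $f_2$ lie on a common root path. Because $f_1$ was chosen as a deepest uncovered edge while $f_2$ was still uncovered, $f_2$ cannot lie strictly below $f_1$. Hence $f_2$ lies above $f_1$, and both are inside the path of $\ell$.

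At iteration $i$, the chosen uplink $\ell_i$ covers $f_1$ and has a top endpoint at least as high as the top of $\ell$. Since $\ell$ also covers $f_1$, both uplinks have sources in the subtree below $f_1$, so both paths run upward through $f_1$. The path of $\ell_i$ therefore contains the path of $\ell$ above $f_1$, and in particular it contains $f_2$. Then $f_2$ was already covered after iteration $i$, so it could not be selected at iteration $j$. This contradiction shows $\Packing$ is a packing.

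The step that needs care is exactly this containment claim. It relies on uplinks being ancestor--descendant paths, so that any two uplinks covering the same edge $f_1$ both pass upward through $f_1$ and are nested above it. I will state this explicitly, and I will also check the multigraph and parallel-copy setting: parallel uplinks are simply distinct elements with identical paths, so nothing changes. The running time is clearly polynomial, with at most $|V|-1$ iterations, each scanning the edges and uplinks. This completes min $=$ max together with the construction of both objects.
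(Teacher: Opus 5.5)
Your proposal is correct and follows the paper's proof essentially step for step: the same greedy (deepest uncovered edge, then a covering uplink with highest top endpoint), the same contradiction showing the selected edges form a packing, and the same weak-duality count giving equality. Your explicit justification that both uplinks pass upward through $f_1$, so the greedy uplink's path contains that of $\ell$ above $f_1$, spells out a step the paper leaves implicit.
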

\begin{proof}
Repeatedly choose a deepest uncovered tree edge $f$.  Among the uplinks covering $f$, choose one whose top endpoint is highest.  Add the chosen uplink to the cover, add $f$ to $\Packing$, and mark every tree edge covered by the chosen uplink.

The selected uplinks cover the tree when the procedure stops.  We claim that $\Packing$ is a packing.  Suppose an uplink $\ell'$ covered two selected edges $f_i,f_j$, with $f_i$ selected first.  Both edges lie on the ancestor--descendant path covered by $\ell'$ and are therefore comparable.  The edge $f_i$ is the deeper one: otherwise $f_j$ was still uncovered when $f_i$ was chosen and would have been selected instead.  At that iteration, $\ell'$ was a candidate covering $f_i$.  The greedy uplink chosen for $f_i$ has top endpoint at least as high as that of $\ell'$, and therefore also covers $f_j$.  Thus $f_j$ would have been marked and could not later enter $\Packing$, a contradiction.

Every uplink cover has size at least $|\Packing|$, because one uplink covers at most one packed edge.  The greedy cover has one uplink per selected edge, so cover and packing sizes are equal and optimal.
\end{proof}

The algorithm returns the $m+u-1$ heavy tree edges together with a minimum uplink cover.  Every tree edge is covered by a selected uplink and therefore lies on a fundamental cycle; every selected uplink lies on the same cycle.  Hence the returned connected multigraph has no bridge and is 2-edge-connected.  If $\Packing$ is a maximum packing, then
\begin{equation}\label{eq:alg-basic}
 \ALG=(m+u-1)+|\Packing|.
\end{equation}
Under the explicit edge-copy encoding, minimum-cut separation and polynomially many lexicographic optimizations over the optimal face compute a rational basic optimum using exact arithmetic in time polynomial in the binary input length.  The deterministic tie-breaking order fixes the DFS execution.  The greedy procedure in \cref{prop:minmax} makes at most $|E(T)|$ selections and can scan $\Uplinks$ at each iteration, so the augmentation and hence the whole algorithm are polynomial-time.

\begin{figure}[t]
\centering
\begin{tikzpicture}[scale=.92]
  \node[vnode] (rootnode) at (0,3.5) {};
  \node[vnode] (a) at (-1.5,2.2) {};
  \node[vnode] (b) at (1.5,2.2) {};
  \node[vnode] (c) at (-2.0,.7) {};
  \node[vnode] (d) at (-.8,.7) {};
  \node[vnode] (e) at (.8,.7) {};
  \node[vnode] (f) at (2.0,.7) {};
  \draw[heavyedge] (rootnode)--(a) (rootnode)--(b) (a)--(c) (b)--(f);
  \draw[lightedge] (a)--(d) (b)--(e);
  \draw[packingedge] (a)--(d);
  \draw[packingedge] (rootnode)--(b);
  \draw[backedge,bend left=25] (d) to (rootnode);
  \draw[backedge,bend right=23] (e) to (rootnode);
  \draw[backedge,bend right=37] (f) to (rootnode);
  \draw[heavyedge] (2.8,3.15)--(3.55,3.15);
  \node[font=\small,anchor=west] at (3.75,3.15) {heavy tree edge};
  \draw[lightedge] (2.8,2.60)--(3.55,2.60);
  \node[font=\small,anchor=west] at (3.75,2.60) {matching tree edge};
  \draw[packingedge] (2.8,2.05)--(3.55,2.05);
  \node[font=\small,anchor=west] at (3.75,2.05) {edge in $\Packing$};
  \draw[backedge] (2.8,1.50)--(3.55,1.50);
  \node[font=\small,anchor=west] at (3.75,1.50) {uplink};
\end{tikzpicture}
\caption{An uplink packing $\Packing$, the exact certificate for the residual cover problem.  Gray and double-green tree edges are heavy and matching edges, respectively, while blue overlays mark the two packed edges.  Each dashed red arrow joins a vertex to an ancestor and is therefore an uplink; the two packing bundles are disjoint because no uplink covers both blue edges.}
\label{fig:packing}
\end{figure}

\section{The exact packing-slack identity}\label{sec:slack}

Fix a maximum packing $\Packing$ and split it into
\[
 \PackM:=\Packing\cap M,
 \qquad
 \PackH:=\Packing\cap(T\setminus M).
\]
Write
\[
 \npacked:=|\PackM|,
 \qquad
 \Missing:=m-\npacked,
\]
and let
\[
 \HeavyTree:=T\setminus M,
 \qquad
 \TreeMass:=x^*(\HeavyTree),
 \qquad
 \PackedMass:=x^*(\PackH),
 \qquad
 \OutsideMass:=\TreeMass-\PackedMass.
\]
Thus $\Missing$ counts matching edges absent from the residual packing, and $\OutsideMass$ is heavy tree LP mass outside the packing.

Because $\Packing$ is a packing, the bundle sets $\Uplinks(f)$, $f\in \Packing$, are pairwise disjoint.  Define the \emph{packing slack}
\begin{equation}\label{eq:sigma-def}
 \PackSlack:=c^*-\TreeMass-\sum_{f\in \Packing}(2-x^*_f).
\end{equation}
By \eqref{eq:fundamental}, $\PackSlack\ge0$.

The following decomposition gives $\PackSlack$ a direct interpretation.  Let
\[
 \Uplinks_{\mathrm{out}}:=\Uplinks\setminus\bigcup_{f\in \Packing}\Uplinks(f),
 \qquad
 \localPackSlack{f}:=x^*(\Uplinks(f))-(2-x^*_f)\ge0.
\]
Then
\begin{equation}\label{eq:sigma-decomp}
 \PackSlack=x^*(\Uplinks_{\mathrm{out}})+\sum_{f\in \Packing}\localPackSlack{f}.
\end{equation}
It consists of non-tree mass outside all packing bundles plus surplus beyond the fundamental-cut requirement inside those bundles.  For a packed matching edge $e\in \PackM$, write
\[
 \localPackSlack{e}=x^*(\Uplinks(e))-1.
\]

\begin{lemma}[exact packing identities]\label{lem:exact-identities}
The following identities hold:
\begin{align}
 2|\Packing|&=c^*+\npacked-\OutsideMass-\PackSlack,\label{eq:packing-exact}\\
 \Missing+u+\LPex&=\OutsideMass+2|\PackH|+\PackSlack.\label{eq:second-exact}
\end{align}
Consequently, for
\[
 \Saving:=\Missing+\OutsideMass+\PackSlack,
\]
the algorithmic cost is exactly
\begin{equation}\label{eq:alg-exact}
 \ALG=2m+\frac32u+\frac12\LPex-1-\frac12\Saving.
\end{equation}
\end{lemma}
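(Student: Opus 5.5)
Since every identity in \cref{lem:exact-identities} is linear bookkeeping, the plan is to derive them directly from the definition \eqref{eq:sigma-def} of $\PackSlack$. No inequality is needed anywhere.

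\textbf{First identity \eqref{eq:packing-exact}.} Rearrange \eqref{eq:sigma-def} as
\[
 c^*-\TreeMass-\PackSlack=\sum_{f\in\Packing}(2-x^*_f)=2|\Packing|-x^*(\PackM)-x^*(\PackH).
\]
By \cref{lem:normalization}, $x^*_e=1$ for every $e\in M$, so $x^*(\PackM)=\npacked$. By definition, $x^*(\PackH)=\PackedMass=\TreeMass-\OutsideMass$. Substituting gives $2|\Packing|=c^*+\npacked+\PackedMass-\TreeMass-\PackSlack$, and this is \eqref{eq:packing-exact} once $\TreeMass-\PackedMass$ is replaced by $\OutsideMass$.

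\textbf{Second identity \eqref{eq:second-exact}.} Write $|\Packing|=\npacked+|\PackH|$ and substitute $c^*=m+u+\LPex$ and $\npacked=m-\Missing$. Identity \eqref{eq:packing-exact} then becomes
\[
 2\npacked+2|\PackH|=2m-\Missing+u+\LPex-\OutsideMass-\PackSlack.
\]
Replacing $2\npacked$ by $2m-2\Missing$ and cancelling $2m$ yields $\Missing+u+\LPex=\OutsideMass+2|\PackH|+\PackSlack$.

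\textbf{Cost formula \eqref{eq:alg-exact}.} Start from \eqref{eq:alg-basic}, which gives $\ALG=m+u-1+|\Packing|$. Insert $|\Packing|=\tfrac12(m+u+\LPex+m-\Missing-\OutsideMass-\PackSlack)$, obtained from \eqref{eq:packing-exact}. Collecting terms gives $2m+\tfrac32u+\tfrac12\LPex-1-\tfrac12\Saving$.

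\textbf{Where care is needed.} The only delicate point is making sure the ingredients are applied consistently:
\begin{itemize}
\item the normalization $x^*_M=\one$ must be used to get $x^*(\PackM)=\npacked$;
\item the packing $\Packing$ must be maximum, since \eqref{eq:alg-basic} uses its cardinality via \cref{prop:minmax};
\item the heavy-tree mass $\TreeMass$ must be the same quantity in \eqref{eq:sigma-def} and in the definition of $\OutsideMass$.
\end{itemize}
Nonnegativity of $\PackSlack$ plays no role in the identities themselves. The decomposition \eqref{eq:sigma-decomp} is also not needed here, although it can be checked separately as a sanity test: $c^*-\TreeMass=x^*(\Uplinks)$, since every heavy support edge is either a heavy tree edge or an uplink, and the bundles $\Uplinks(f)$, $f\in\Packing$, are disjoint.
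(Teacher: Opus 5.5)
Your proposal is correct and matches the paper's proof essentially step for step: you expand the definition \eqref{eq:sigma-def} using $x^*(\Packing)=\npacked+\PackedMass$, which rests on the normalization $x^*_M=\one$ as you note. You then substitute $|\Packing|=\npacked+|\PackH|$ and $c^*=m+u+\LPex$ for the second identity, and insert \eqref{eq:packing-exact} into \eqref{eq:alg-basic} for the cost formula, with the algebra correct throughout.
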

\begin{proof}
Since $x^*(\Packing)=\npacked+\PackedMass$, the definition \eqref{eq:sigma-def} of $\PackSlack$ gives
\[
 \PackSlack=c^*-\TreeMass-2|\Packing|+\npacked+\PackedMass=c^*+\npacked-\OutsideMass-2|\Packing|,
\]
which is \eqref{eq:packing-exact}.  Writing $|\Packing|=\npacked+|\PackH|$ and $c^*=m+u+\LPex$ in the same identity gives \eqref{eq:second-exact}.  Substituting \eqref{eq:packing-exact} into \eqref{eq:alg-basic} proves \eqref{eq:alg-exact}.
\end{proof}

The proof will lower-bound the exact saving $\Saving$.  The three terms have different interpretations:
\begin{center}
\begin{tabular}{@{}cl@{}}
$\Missing$ & matching edges missing from the packing;\\
$\OutsideMass$ & heavy DFS-tree mass outside the packing;\\
$\PackSlack$ & non-tree LP mass unused by the packed fundamental cuts.
\end{tabular}
\end{center}

A useful consequence of \eqref{eq:second-exact} is
\begin{equation}\label{eq:tau-upper}
 \TreeMass\le\frac12(\Missing+u+\LPex+\OutsideMass).
\end{equation}
Indeed, $\PackedMass\le|\PackH|$ and $\TreeMass=\OutsideMass+\PackedMass$.

\subsection{A six-vertex example}\label{subsec:envelope}

We review the known $4/3$ gap instance for the standard cut relaxation~\citep[Theorem~2 and Figure~1(a)]{BDS} as it illustrates the exact identity.  Let the three light edges be $ab$, $cd$, and $ef$, and let the heavy edges form the triangles induced by $\{a,c,e\}$ and $\{b,d,f\}$, see \Cref{fig:envelope}.  Give every heavy edge value $1/2$.  This is a feasible LP solution of value $c^*=3$.

Root DFS at $a$ and use a deterministic tie-breaking order producing the tree path
\[
 a-b-d-c-e-f.
\]
The heavy tree edges are $bd$ and $ce$, so $\TreeMass=1$.  The four remaining heavy edges are uplinks.  A minimum cover has size two, and a maximum packing is $\Packing=\{ab,ef\}$.  Hence
\[
 \Missing=1,
 \qquad \OutsideMass=1,
 \qquad \PackSlack=0,
 \qquad \ALG=2+2=4.
\]
The exact identity gives
\[
 4=2m-1-\frac12(\Missing+\OutsideMass+\PackSlack)=5-1.
\]
Moreover, the two uplinks from $a$ into the left triangle are self-holes of the packed edge $ab$, of total mass $\SelfMass=1$.  Thus this example also attains equality in the self-hole theorem of \cref{lem:selfhole} through its unique root exception; see \cref{rem:root-exception}.

\begin{figure}[H]
\centering
\begin{tikzpicture}[scale=.94,every node/.style={font=\small}]
\node[vnode] (a) at (0,2) {$a$};
\node[vnode] (e) at (0,0) {$e$};
\node[vnode] (c) at (1.45,1) {$c$};
\node[vnode] (b) at (5.2,2) {$b$};
\node[vnode] (f) at (5.2,0) {$f$};
\node[vnode] (d) at (3.75,1) {$d$};
\draw[lightedge] (a)--(b) node[midway,above=3pt,fill=white,inner sep=1pt] {$ab$};
\draw[lightedge] (c)--(d) node[midway,above=3pt,fill=white,inner sep=1pt] {$cd$};
\draw[lightedge] (e)--(f) node[midway,below=3pt,fill=white,inner sep=1pt] {$ef$};
\draw[backedge,bend left=8] (e) to (a);
\draw[backedge,bend left=9] (c) to (a);
\draw[gateedge] (c)--(e);
\draw[backedge,bend right=8] (f) to (b);
\draw[gateedge] (b)--(d);
\draw[backedge,bend left=9] (f) to (d);
\draw[lightedge] (6.05,1.85)--(6.80,1.85);
\node[anchor=west] at (7.00,1.85) {matching edge};
\draw[gateedge] (6.05,1.25)--(6.80,1.25);
\node[anchor=west] at (7.00,1.25) {heavy DFS-tree edge};
\draw[backedge] (6.05,.65)--(6.80,.65);
\node[anchor=west] at (7.00,.65) {residual uplink};
\node[align=left,anchor=west] at (6.05,.05) {Every heavy edge has\\LP value $1/2$.};
\end{tikzpicture}
\caption{The six-vertex envelope instance.  The double-green edges form the matching $M=\{ab,cd,ef\}$, and every heavy edge has LP value $1/2$.  Rooted DFS at $a$ follows the path $a-b-d-c-e-f$: the thick dotted edges $bd$ and $ce$ (orange) are its heavy tree edges, while the four dashed arrows (red) are the residual uplinks.}
\label{fig:envelope}
\end{figure}

\section{Fractional support relative to the integral skeleton}\label{sec:rank}

The regional argument must bound the aggregate deficit of many small gated copies.  The global resource is a support-volume bound obtained from the unit-valued coordinates of the extreme point.

Let
\[
 \Skeleton:=\{e\in E:x^*_e=1\}
\]
be the \emph{integral skeleton}; it contains $M$.  Let
\[
 \FracHeavy:=\{e\in E\setminus M:0<x^*_e<1\}
\]
be the fractional heavy support.  Write $\ncomp$ for the number of connected components of $(V,\Skeleton)$ and $\nbridges$ for its number of bridge edge copies.

In the following theorem, the edge set of a nonnegatively weighted graph is understood to be its positive-weight support.
The following is \citet[Theorem~1]{LLSZ}, stated in the notation needed here.

\begin{theorem}[minimum-cut dimension]\label{thm:mincut-dimension}
Let $Q=(W,F)$ be an undirected graph with nonnegative edge weights, let $q:=|W|\ge2$, and let $\mathcal C_{\min}(Q)$ be the family of its global minimum cuts.  For $C\in\mathcal C_{\min}(Q)$, let $\chi_F(C)\in\{0,1\}^F$ denote its characteristic vector.  Then
\[
 \dim\operatorname{span}\{\chi_F(C):C\in\mathcal C_{\min}(Q)\}\le 2q-3.
\]
\end{theorem}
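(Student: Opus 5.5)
The plan is the classical uncrossing argument, the same one used for extreme points of cut LPs. The positive-support convention in the statement is what makes it work: the uncrossing identity then holds exactly at the level of characteristic vectors.

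\emph{Step 1 (uncrossing identity).} Let $\lambda$ be the minimum cut weight of $Q$. Identify each cut $C$ with a shore $A\subsetneq W$, so $C=\delta(A)$, and write $d(X,Y)$ for the total weight of edges between disjoint sets $X$ and $Y$. Suppose $A$ and $B$ give minimum cuts and cross, meaning all four of $A\cap B$, $A\setminus B$, $B\setminus A$ and $W\setminus(A\cup B)$ are nonempty. The standard identity is
\[
 w(\delta(A))+w(\delta(B))=w(\delta(A\cap B))+w(\delta(A\cup B))+2\,d(A\setminus B,\,B\setminus A).
\]
Since the last two cuts have weight at least $\lambda$ and the left side equals $2\lambda$, both $A\cap B$ and $A\cup B$ are minimum cuts and $d(A\setminus B,B\setminus A)=0$. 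Because $F$ consists only of positive-weight edges, there are no edges of $F$ between $A\setminus B$ and $B\setminus A$. Hence the multiset identity is exact at the vector level:
\[
 \chi_F(\delta(A))+\chi_F(\delta(B))=\chi_F(\delta(A\cap B))+\chi_F(\delta(A\cup B)).
\]
Applying the same argument to $A$ and $W\setminus B$ handles the other pair of corners, $A\setminus B$ and $B\setminus A$.

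\emph{Step 2 (a cross-free family spans everything).} Take a maximal cross-free subfamily $\mathcal L\subseteq\mathcal C_{\min}(Q)$ and suppose some minimum cut vector lies outside $\operatorname{span}\chi_F(\mathcal L)$. Among such cuts choose $\delta(A)$ crossing the fewest members of $\mathcal L$. By maximality, $A$ crosses some $B\in\mathcal L$. I then use the standard counting lemma (as in Jain's iterative rounding): every set crossing $A\cap B$ or $A\cup B$ that is cross-free with $B$ also crosses $A$, and $B$ itself crosses neither corner. So $A\cap B$ and $A\cup B$ each cross strictly fewer members of $\mathcal L$ than $A$ does, and by minimality both lie in the span. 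The identity of Step 1 then puts $\chi_F(\delta(A))$ in the span, a contradiction. This counting step is the one I expect to need the most care. It must be phrased for crossing (not merely intersecting) families, with complements handled, and I would either reprove it directly or reduce to the laminar case as in Step 3.

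\emph{Step 3 (counting).} Fix a vertex $r\in W$ and represent every cut by its shore avoiding $r$. A cross-free family then becomes a laminar family of nonempty subsets of $W\setminus\{r\}$, which has $q-1$ elements. Moreover $W\setminus\{r\}$ itself is not a proper cut shore of this kind, since its complement $\{r\}$ is allowed and is the same cut. A laminar family of nonempty subsets of a $(q-1)$-element ground set has at most $2(q-1)-1$ members. Thus $|\mathcal L|\le 2q-3$, and so the span has dimension at most $2q-3$.

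If $q=2$ there is a single cut and the bound $1$ is immediate.
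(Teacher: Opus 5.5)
Your proof is correct. The paper does not prove this statement; it imports it as Theorem~1 of \citet{LLSZ}. Your argument is the standard uncrossing proof of that bound: the exact vector identity from positive support, a maximal cross-free family spanning all minimum-cut vectors, and the laminar count.

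The crossing-count lemma you flagged does hold. Suppose a member $C$ of the cross-free family crosses $A\cap B$ but does not cross $B$. Then $C\supseteq B$ and $C\cap B=\emptyset$ are impossible, so $C\subseteq B$ or $C\cup B=W$. A short check shows that either case forces $C$ to cross $A$. The $A\cup B$ case follows by applying this to $W\setminus A$ and $W\setminus B$.

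One small slip in Step~3: $W\setminus\{r\}$ is exactly the $r$-avoiding shore of $\delta(\{r\})$, and it can belong to the laminar family. Your count $2(q-1)-1$ already includes the whole ground set, so the conclusion $|\mathcal L|\le 2q-3$ is unaffected.
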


For comparison, \citet{BoydPulleyblank} proved the classical TSP-specific bound that every extreme point of the subtour-elimination polytope on $q$ vertices has support of size at most $2q-3$.  For MAP, \citet[Lemma~1]{BDS} proved that an extreme cut-LP solution has at most $2n-1$ fractional edges.  The estimate below instead records how the fractional support depends on the component and bridge structure of the unit-valued skeleton.

The next theorem shows how this cut dimension interacts with the integral skeleton.

\begin{theorem}[integral-skeleton rank]\label{thm:rank}
\[
 |\FracHeavy|\le \nbridges+(2\ncomp-3)_+.
\]
\end{theorem}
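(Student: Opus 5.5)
Since $x^*$ is an extreme point of \eqref{lp:map}, the fractional coordinates are determined by the tight constraints. Every coordinate with $x^*_e=1$ is fixed by its upper bound, and every coordinate with $x^*_e=0$ by its lower bound. So the number of fractional coordinates, which is at least $|\FracHeavy|$, is at most the rank of the characteristic vectors of the tight cuts $\delta(S)$ with $x^*(\delta(S))=2$, restricted to the fractional coordinates. Call this restriction $R$ and let $\mathcal T$ be the family of tight cuts. Matching coordinates are all $1$ after \cref{lem:normalization}, so every fractional coordinate is heavy. The task is therefore to bound $\operatorname{rank}\{\chi_R(\delta(S)):S\in\mathcal T\}$ by $\nbridges+(2\ncomp-3)_+$.

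First, classify the tight cuts by how they meet the skeleton $\Skeleton$. Any edge of $\Skeleton$ in a tight cut contributes $1$ to the cut value of $2$. Consider a cut $\delta(S)$ that separates some component of $(V,\Skeleton)$. It then contains at least one skeleton edge of that component, and if the skeleton edges it crosses are not bridges, it contains at least two skeleton edges. In that case it is tight with value $2$ on skeleton edges alone, so its restriction to $R$ is zero. The tight cuts with nonzero restriction are therefore of two kinds. Type A cuts are unions of skeleton components and cross no skeleton edge. Type B cuts cross exactly one skeleton edge, which must be a bridge $b$ of $\Skeleton$ (if it crossed two, the fractional part would again vanish). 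A type B cut crossing $b$ carries fractional mass exactly $1$.

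For type A, contract each skeleton component to a single vertex. This gives a quotient graph $Q$ on $q=\ncomp$ vertices, weighted by the fractional coordinates. Type A tight cuts are cuts of $Q$ of weight $2$. Every cut of $Q$ corresponds to a cut of $G$ crossing no skeleton edge, so its weight is at least $2$, and the tight type A cuts are global minimum cuts of $Q$. If $\ncomp=1$, there is no type A cut. Otherwise \cref{thm:mincut-dimension} bounds the span of the type A vectors by $2\ncomp-3$. Edges of $R$ inside a component are not crossed by type A cuts and only add zero coordinates, so the bound transfers.

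For type B, fix a bridge $b$ and show that all tight cuts crossing $b$ add at most one dimension beyond the span of the type A vectors. This is the step I expect to be the main obstacle. The plan is to take two tight cuts $S_1,S_2$ that cross $b$ and orient them so that both contain the same endpoint of $b$. By submodularity and posimodularity of $x^*(\delta(\cdot))$, either $S_1\cap S_2$ and $S_1\cup S_2$ are tight, or $S_1\setminus S_2$ and $S_2\setminus S_1$ are tight. Both uncrossing pairs give the identity $\chi(\delta(S_1))+\chi(\delta(S_2))=\chi(\delta(A))+\chi(\delta(B))$, up to edges between the two differences in the posimodular case, which must carry zero weight when tightness holds. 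In the intersection/union case, both new cuts still cross $b$. The aim is to show that the difference $\chi_R(\delta(S_1))-\chi_R(\delta(S_2))$ lies in the span of type A cuts and skeleton-internal cuts. A cleaner route may be to contract $b$'s two sides within its component, or to observe that $S_1\triangle S_2$ avoids $b$. The cut $\delta(S_1\triangle S_2)$ has a fractional restriction equal to the sum of the two restrictions modulo twice the shared edges. An inductive argument over a laminar uncrossed family then shows that each bridge contributes at most one new independent vector. Summing, the fractional rank is at most $\nbridges+(2\ncomp-3)_+$, which proves the theorem.
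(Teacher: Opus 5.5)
Your reduction to tight rows, the classification of relevant tight cuts, and the type-A bound via contraction and \cref{thm:mincut-dimension} match the paper. You should also merge parallel fractional copies between the same two components into one quotient edge, which loses no rank. The gap is the bridge step. You flag it as the main obstacle, and the closing appeal to ``an inductive argument over a laminar uncrossed family'' is asserted rather than proved. Two concrete ideas are missing.

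First, you have the posimodular identity wrong. Its doubled term is not the edges between the two differences; that is the doubled term of the intersection/union identity. The correct identity is
\[
 x^*(\delta(S_1))+x^*(\delta(S_2))=x^*(\delta(S_1\setminus S_2))+x^*(\delta(S_2\setminus S_1))+2x^*\bigl(E(S_1\cap S_2,\,V\setminus(S_1\cup S_2))\bigr).
\]
Orient both sets to contain $p$ and avoid $q$, where $b=pq$. Then $b$ itself lies in that doubled term, so two crossing tight sets would give $4\ge2+2+2x^*_b=6$. Hence the tight sets crossing $b$ already form a chain, and no ``either/or'' uncrossing analysis or laminar induction is needed.

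Second, you need the nested-pair computation. For $X\subsetneq Y$ in the chain, put $R:=Y\setminus X$ and $Q:=V\setminus Y$. The edge $b$ is the only skeleton edge in $\delta(X)$ and in $\delta(Y)$, and it joins $X$ to $Q$. Hence no skeleton edge crosses $\delta(R)$. Write $\mu_{XR},\mu_{XQ},\mu_{RQ}$ for the fractional mass between the three parts. Tightness gives $\mu_{XR}+\mu_{XQ}=1=\mu_{XQ}+\mu_{RQ}$, and feasibility of $R$ gives $\mu_{XR}+\mu_{RQ}\ge2$. Therefore $\mu_{XQ}=0$, $R$ is a tight type-A set, and, since every fractional coordinate is positive, no fractional edge joins $X$ to $Q$.

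Only with these facts does your symmetric-difference remark pay off. We have $X\triangle Y=R$, and $\delta(X)\cap\delta(Y)=E(X,Q)$ has empty fractional part. So the fractional restrictions satisfy $r(X)+r(Y)=r(R)$, which lies in the type-A span. Note that this is a sum, not the difference you propose to place in the type-A span. Combined with the sum relation, a difference relation would force $r(X)$ itself into that span, which nothing supports. Also, cuts that are not tight (your ``skeleton-internal cuts'') cannot enter a rank bound on tight rows. The chain property and the sum relation together show that all type-B rows for $b$ agree up to sign modulo the type-A span. That gives at most one extra dimension per bridge.
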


\begin{proof}
For a vertex set $S$, write
\[
 r(S):=\chi_{\FracHeavy}(\delta(S))\in\{0,1\}^{\FracHeavy}.
\]
Because $x^*$ is extreme and every coordinate in $\FracHeavy$ lies strictly between its bounds, the rows $r(S)$ of the tight cuts span $\mathbb R^{\FracHeavy}$.  Otherwise there is a nonzero direction $d$ supported on $\FracHeavy$ and orthogonal to every tight cut row; then $x^*\pm\eta d$ remains feasible for sufficiently small $\eta>0$, contradicting extremality.

A relevant tight cut, meaning one with $r(S)\ne0$, crosses at most one edge of $\Skeleton$: each such edge contributes one to a cut of total value two, and a tight cut crossing two skeleton edges has zero $\FracHeavy$-row.  If it crosses one skeleton edge $h$, then $h$ is a bridge of $\Skeleton$, because the intersection of any cut with a cycle has even cardinality, counting copies.

Let $\mathcal T_0$ be the relevant tight sets crossing no skeleton edge.  For each bridge $h$ of $\Skeleton$, let $\mathcal T_h$ be the relevant tight sets whose only crossing skeleton edge is $h$, and set
\[
 W_0:=\operatorname{span}\{r(S):S\in\mathcal T_0\},
 \qquad
 W_h:=W_0+\operatorname{span}\{r(S):S\in\mathcal T_h\}.
\]
These families contain every relevant tight row.
The two types of tight cuts used in the proof are illustrated in \cref{fig:rank}.

\paragraph{Type-zero rows.}
Every set in $\mathcal T_0$ is a union of components of $\Skeleton$.  Contract the $\ncomp$ components and, for each pair of resulting supervertices, replace its parallel fractional copies by one quotient edge whose weight is their total $x^*$-value.  Every quotient edge has positive weight, since it represents a nonempty set of fractional support copies.  Fractional copies within one skeleton component become loops and never occur in a type-zero row; zero-valued input copies are not coordinates of $\FracHeavy$ and are irrelevant.  Every quotient cut has weight at least two, and any cut represented by $\mathcal T_0$ has weight two and is therefore a global minimum cut.

The quotient cut vector pulls back to the corresponding row in $\mathbb R^{\FracHeavy}$ by duplicating its coordinate on all original parallel copies and inserting zero on within-component copies.  This linear duplication map is injective, so it preserves the dimension of the quotient row space.  By \cref{thm:mincut-dimension},
\[
 \dim W_0\le (2\ncomp-3)_+;
\]
for $\ncomp=1$, the family $\mathcal T_0$ is empty.

\paragraph{Rows crossing one fixed bridge.}
Fix a bridge $h=pq$ of $\Skeleton$ and orient every set in $\mathcal T_h$ so that it contains $p$ and excludes $q$.  These sets form a chain.  Otherwise two of them, say $X$ and $Y$, cross.  With
\[
 A=X\setminus Y,
 \quad B=Y\setminus X,
 \quad C=X\cap Y,
 \quad D=V\setminus(X\cup Y),
\]
we have $p\in C$, $q\in D$, and
\[
 x^*(\delta(X))+x^*(\delta(Y))
 =x^*(\delta(A))+x^*(\delta(B))+2x^*(E(C,D)).
\]
The left side is four, whereas the right side is at least $2+2+2x^*_h=6$, a contradiction.

Let $X\subsetneq Y$ be two sets in this chain, and put $R:=Y\setminus X$ and $Q:=V\setminus Y$.  No skeleton edge crosses $R$.  Define
\[
 \mu_{XR}:=x^*_{\FracHeavy}(E(X,R)),
 \qquad
 \mu_{XQ}:=x^*_{\FracHeavy}(E(X,Q)),
 \qquad
 \mu_{RQ}:=x^*_{\FracHeavy}(E(R,Q)).
\]
Tightness of $X$ and $Y$ gives $\mu_{XR}+\mu_{XQ}=1$ and $\mu_{XQ}+\mu_{RQ}=1$, while feasibility of the type-zero cut $R$ gives $\mu_{XR}+\mu_{RQ}\ge2$.  Hence $\mu_{XQ}=0$ and $\mu_{XR}=\mu_{RQ}=1$.  Thus $R\in\mathcal T_0$, and positivity of every coordinate in $\FracHeavy$ implies that no fractional support edge joins $X$ to $Q$.  Consequently,
\begin{equation}\label{eq:bridge-relation}
 r(X)+r(Y)=r(R)\in W_0.
\end{equation}
By \eqref{eq:bridge-relation}, all rows of $\mathcal T_h$ have the same one-dimensional span modulo $W_0$, up to sign, and $\dim(W_h/W_0)\le1$.

Since the relevant tight rows span $\mathbb R^{\FracHeavy}$ and there are $\nbridges$ bridge classes, applying \cref{thm:mincut-dimension} we obtain
\[
 |\FracHeavy|
 \le \dim W_0+\sum_h\dim(W_h/W_0)
 \le (2\ncomp-3)_+ +\nbridges.
\]
\end{proof}

\begin{figure}[t]
\centering
\begin{tikzpicture}[scale=.88,every node/.style={font=\small}]
% Left panel: type-zero cuts
\node[draw=figgreen,fill=figlightgreen,rounded corners,minimum width=16mm,minimum height=12mm] (c1) at (0,1.5) {$C_1$};
\node[draw=figgreen,fill=figlightgreen,rounded corners,minimum width=16mm,minimum height=12mm] (c2) at (2.0,2.3) {$C_2$};
\node[draw=figgreen,fill=figlightgreen,rounded corners,minimum width=16mm,minimum height=12mm] (c3) at (2.0,.7) {$C_3$};
\node[draw=figgreen,fill=figlightgreen,rounded corners,minimum width=16mm,minimum height=12mm] (c4) at (4.0,1.5) {$C_4$};
\draw[figblue,thick] (c1)--(c2) (c1)--(c3) (c2)--(c4) (c3)--(c4);
\draw[figblue,dashed,rounded corners,line width=1.1pt] (-1.0,-.1) rectangle (3.0,3.1);
\node[figblue] at (1.0,3.38) {type-zero minimum cut};
\node at (2,-.35) {contract components of $\Skeleton$};
% separator
\draw[figslate!45] (4.9,-.45)--(4.9,3.35);
% Right panel: bridge cuts
\node[draw=figgreen,fill=figlightgreen,rounded corners,minimum width=22mm,minimum height=15mm] (l) at (6.5,1.5) {$I_L$};
\node[draw=figgreen,fill=figlightgreen,rounded corners,minimum width=22mm,minimum height=15mm] (rightskel) at (9.5,1.5) {$I_R$};
\draw[figorange,line width=2.5pt] (l)--(rightskel) node[pos=.15,above=2pt] {$h$};
\draw[figpurple,dashed,rounded corners,line width=1.1pt] (5.2,.5) rectangle (7.95,2.5);
\draw[figpurple,densely dotted,rounded corners,line width=1.1pt] (5.05,.2) rectangle (8.2,2.8);
\node[figpurple,align=center] at (7.45,3.38) {nested sets crossing\\one fixed bridge};
\node at (7.55,-.35) {one quotient dimension};
\end{tikzpicture}
\caption{The rank theorem has two parts.  Tight cuts avoiding the integral skeleton become minimum cuts after contracting its components; tight cuts crossing one fixed integral bridge form a chain and add at most one dimension modulo the first family.}
\label{fig:rank}
\end{figure}

\begin{lemma}[support-volume bound]\label{lem:support-volume}
For every set $\mathcal A\subseteq\Uplinks$,
\[
 \sum_{e\in\mathcal A}(1-x^*_e)\le m+\TreeMass.
\]
\end{lemma}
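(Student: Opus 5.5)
The plan is to discard everything except the fractional uplinks and then play \cref{thm:rank} against a mass lower bound coming from the cut constraints on skeleton components.

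\textbf{Step 1 (reduce to the fractional heavy support).} An uplink with $x^*_e=1$ contributes $0$, and every other term is positive. So it suffices to take $\mathcal A=\Uplinks\cap\FracHeavy$. Fractional heavy tree edges also satisfy $1-x^*_e>0$, so adding them only increases the sum. Hence
\[
 \sum_{e\in\mathcal A}(1-x^*_e)\le |\FracHeavy|-x^*(\FracHeavy),
\]
and it remains to bound the right-hand side by $m+\TreeMass$.

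\textbf{Step 2 (lower-bound the fractional mass).} Every component $K$ of $(V,\Skeleton)$ is a proper nonempty set when $\ncomp\ge2$, so $x^*(\delta(K))\ge2$. No skeleton edge leaves $K$, so this mass is carried entirely by $\FracHeavy$ edges joining distinct components. Summing over the $\ncomp$ components counts each such edge twice, which gives $x^*(\FracHeavy)\ge\ncomp$. Combining this with \cref{thm:rank},
\[
 |\FracHeavy|-x^*(\FracHeavy)\le \nbridges+(2\ncomp-3)_+-\ncomp\le \nbridges+\ncomp-3 .
\]
The case $\ncomp=1$ is handled separately: there $|\FracHeavy|\le\nbridges$ directly.

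\textbf{Step 3 (relate skeleton parameters to $m+\TreeMass$).} The bridges of $\Skeleton$ form a forest inside a graph with $\ncomp$ components. Let $k_1$ be the number of unit-valued heavy tree edges. Then $M$ together with these $k_1$ edges is acyclic, being a subforest of $T$, so $\ncomp\le n-m-k_1=m+u-k_1$. Also $\TreeMass\ge k_1+x^*(\FracHeavy\cap\HeavyTree)$. The aim is to show $\nbridges+\ncomp-3\le m+\TreeMass$, bookkeeping as follows:
\begin{itemize}
 \item Charge the $m$ matching bridges to the $m$ term.
 \item Charge each integral heavy bridge in $T$ to its unit of $\TreeMass$.
 \item Charge each component of $\Skeleton$ to one heavy tree edge leaving it. The tree is spanning and $M$ lies inside components, so at least $\ncomp-1$ heavy tree edges cross between components. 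These are fractional, so together they carry tree mass.
\end{itemize}

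\textbf{Main obstacle.} The crude counts in Step 3 leave an excess of order $u$, plus any integral uplinks that happen to be bridges. The first thing I would try is to refine Step 2 so that the fractional degree at unmatched singleton components is spent. Each such vertex has heavy degree at least $2$ from fractional edges, so its component contributes mass $1$ rather than $\tfrac12\cdot2$ after halving shared edges. I would also use the sharper bound $x^*(\FracHeavy)\ge x^*(\FracHeavy\cap\HeavyTree)+\ldots$, keeping the fractional tree edges separate, so that the crossing tree edges are paid by their own mass rather than double-counted. If that still falls short, the fallback is to apply \cref{thm:rank} only to the uplink part: contract the unit-valued tree edges and matching edges, so that the relevant component count is at most the number of heavy tree edges crossing components. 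Then $2\ncomp-3$ is matched by $\ncomp$ units of mass from the components' cut constraints plus $\ncomp-1$ tree crossings, each weighted by its LP value.
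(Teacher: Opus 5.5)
\textbf{There is a genuine gap, and it is created in Step 1.} By enlarging the sum to all of $\FracHeavy$, you add back the residual of the fractional heavy tree edges. That residual is known exactly. All tree edges lie in the support, and $T$ has exactly $m+u-1$ heavy edges. If $i_T$ of them are integral, then
\[
 \sum_{e\in\FracHeavy\cap T}(1-x^*_e)=(m+u-1-i_T)-(\TreeMass-i_T)=m+u-1-\TreeMass .
\]
This is the ``excess of order $u$'' you ran into. No bookkeeping in Step 3 can absorb it, because the inequality $\nbridges+\ncomp-3\le m+\TreeMass$ you set out to prove is false.

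Here is a counterexample. Take two disjoint heavy $k$-cycles $a_1\cdots a_k$ and $b_1\cdots b_k$ with $k\ge7$ odd, let $M$ be the rungs $a_ib_i$, and put value $1/2$ on every cycle edge.
\begin{itemize}
 \item The point is feasible. A cut crossing exactly one rung crosses at least two edges of some cycle, and a cut crossing no rung crosses at least two edges of each cycle.
 \item It has cost $k=m+u$, so it is optimal.
 \item It is extreme: with the rungs fixed at one, the tight degree constraints on the two odd cycles force the value $1/2$.
\end{itemize}
Here $\Skeleton=M$, so $\nbridges=\ncomp=k$. The $k-1$ heavy tree edges all have value $1/2$, so $\TreeMass=(k-1)/2$. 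Hence $2k-3>k+(k-1)/2$.

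\textbf{The repair is to subtract the tree part instead of adding it.} The left side is at most
\[
 \sum_{e\in\FracHeavy\setminus T}(1-x^*_e)=|\FracHeavy|-x^*(\FracHeavy)-(m+u-1)+\TreeMass .
\]
Your Step 2 already gives $|\FracHeavy|-x^*(\FracHeavy)\le\nbridges+\ncomp-3$ when $\ncomp\ge2$. The only other fact needed is $\nbridges+\ncomp\le n$: deleting all bridges of $\Skeleton$ leaves exactly $\nbridges+\ncomp$ nonempty components. This yields the bound $n-3-(m+u-1)+\TreeMass=m-2+\TreeMass$. For $\ncomp=1$, the bound $|\FracHeavy|\le\nbridges\le n-1$ gives $m+\TreeMass$.

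This is the paper's proof. It needs no charging of components to tree edges and no special treatment of unmatched vertices. Your idea of keeping the fractional tree edges separate was the right instinct, but it has to be applied in Step 1, where it is an exact identity, rather than as a later patch.
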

\begin{proof}
Integral copies contribute zero, so the left side is at most
\[
 \sum_{e\in \FracHeavy\setminus T}(1-x^*_e).
\]
Let $\mu_{\mathrm{frac}}:=x^*(\FracHeavy)$ and let $i_T$ be the number of integral heavy tree edges.  Since $T$ contains exactly $m+u-1$ heavy edges,
\[
 |\FracHeavy\cap T|=m+u-1-i_T,
 \qquad
 x^*(\FracHeavy\cap T)=\TreeMass-i_T.
\]
Therefore
\begin{equation}\label{eq:R-count}
 \sum_{e\in\mathcal A}(1-x^*_e)
 \le |\FracHeavy|-m-u+1-\mu_{\mathrm{frac}}+\TreeMass.
\end{equation}

If $\ncomp\ge2$, summing the cut inequalities of the $\ncomp$ components of $\Skeleton$ gives $\mu_{\mathrm{frac}}\ge\ncomp$.  Deleting the $\nbridges$ bridges of $\Skeleton$ creates exactly $\nbridges+\ncomp$ nonempty components, so $\nbridges+\ncomp\le n$.  By \cref{thm:rank}, the right side of \eqref{eq:R-count} is at most
\[
 \nbridges+2\ncomp-3-m-u+1-\ncomp+\TreeMass
 \le n-m-u-2+\TreeMass
 =m-2+\TreeMass.
\]
If $\ncomp=1$, \cref{thm:rank} gives $|\FracHeavy|\le \nbridges$ and $\nbridges+1\le n$, yielding at most $m+\TreeMass$.  We record the uniform form because it is the only one needed later; the sharper $m-2+\TreeMass$ bound is available whenever the integral skeleton is disconnected.
\end{proof}

\section{A regional classification of packed matching bundles}\label{sec:regional}

Fix $e\in \PackM$, and orient the matching tree edge as
\[
 e=p_e c_e,
\]
where $p_e$ is the parent of $c_e$.  Every $\ell\in \Uplinks(e)$ has ancestor endpoint (top endpoint) $a$ and descendant endpoint (source) $z$.  Let $ac$ be the first tree edge on the path from $a$ toward $z$.  Exactly one of the following occurs.
\begin{enumerate}
\item \emph{Self-hole:} $ac=e$.  Then $a=p_e$.  Let $\localSelfMass{e}$ be the total self-hole mass in $\Uplinks(e)$, and put
\[
 \SelfMass:=\sum_{e\in \PackM}\localSelfMass{e}.
\]
\item \emph{Foreign hole:} $ac\in M\setminus\{e\}$.  Since $\ell$ already covers the packed edge $e$, the packing property implies $ac\notin \PackM$.
\item \emph{Gated copy:} $ac$ is heavy.  We call $ac$ the branch gate of $\ell$.
\end{enumerate}
The three cases are illustrated in \cref{fig:channels}.
These cases are disjoint and exhaustive.  The first tree edge below the top endpoint is either $e$, another matching edge, or heavy.  If it is another matching edge that also belongs to $\Packing$, then the uplink covers two packing edges, contradicting the definition of $\Packing$; hence every foreign hole enters through a matching edge missing from the packing.

\begin{figure}[t]
\centering
\begin{tikzpicture}[scale=.92,every node/.style={font=\small}]
\node[vnode] (u) at (0,2.8) {$p_e$}; \node[vnode] (v) at (0,1.8) {$c_e$}; \node[vnode] (z) at (0,0) {$z$};
\draw[lightedge] (u)--(v); \draw[packingedge] (u)--(v) node[midway,right=3pt] {$e$};
\draw[heavyedge] (v)--(z); \draw[backedge,bend left=34] (z) to (u);
\node[align=center] at (0,-.65) {self-hole};
\node[vnode] (a2) at (4.1,2.8) {$a$}; \node[vnode] (c2) at (4.1,1.9) {}; \node[vnode] (p2) at (4.1,.95) {}; \node[vnode] (z2) at (4.1,0) {$z$};
\draw[lightedge] (a2)--(c2) node[midway,right] {$f\notin \Packing$}; \draw[heavyedge] (c2)--(p2);
\draw[lightedge] (p2)--(z2); \draw[packingedge] (p2)--(z2) node[midway,right=3pt] {$e$};
\draw[backedge,bend left=34] (z2) to (a2);
\node[align=center] at (4.1,-.65) {foreign hole};
\node[vnode] (a3) at (8.2,2.8) {$a$}; \node[vnode] (c3) at (8.2,1.9) {}; \node[vnode] (p3) at (8.2,.95) {}; \node[vnode] (z3) at (8.2,0) {$z$};
\draw[gateedge] (a3)--(c3) node[midway,right] {$g$}; \draw[heavyedge] (c3)--(p3);
\draw[lightedge] (p3)--(z3); \draw[packingedge] (p3)--(z3) node[midway,right=3pt] {$e$};
\draw[backedge,bend left=34] (z3) to (a3);
\node[align=center] at (8.2,-.65) {gated copy};
\end{tikzpicture}
\caption{The three channels for an edge in a packed matching bundle.  The blue overlay marks the packed matching edge $e$ on the $a$--$z$ tree path; in the foreign-hole and gated-copy panels it lies below the first edge $f$ or $g$, with any intervening tree edges compressed.}
\label{fig:channels}
\end{figure}

\subsection{Foreign holes}

\begin{lemma}[foreign-hole budget]\label{lem:foreign}
The total foreign-hole mass is at most $\Missing+2\LPex$.
\end{lemma}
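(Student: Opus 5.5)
The plan is a direct vertex charge: attribute every foreign hole to the parent endpoint of the missing matching edge through which it enters, and bound the total charge at that vertex by its heavy degree.

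First I would fix an injective bookkeeping. The bundles $\Uplinks(e)$, $e\in\Packing$, are pairwise disjoint because $\Packing$ is a packing. Hence every uplink is counted as a foreign hole for at most one packed matching edge, and the total foreign-hole mass is the $x^*$-mass of a set $\mathcal F\subseteq\Uplinks$ with no multiplicities.

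Next I would group $\mathcal F$ by the entering edge. For $\ell\in\mathcal F$ with top endpoint $a$, the first tree edge $ac$ on the path toward the source is a matching edge $f\in M\setminus\PackM$, by the classification preceding the statement. Since $c$ is a child of $a$, the vertex $a$ is the parent endpoint of $f$, and it is matched, so $\reqdeg{a}=1$. Distinct missing matching edges have distinct parent endpoints, because $M$ is a matching. So the classes $\mathcal F_f$, for $f\in M\setminus\PackM$, are indexed by distinct vertices $a_f$. Every $\ell\in\mathcal F_f$ is a heavy edge incident to $a_f$, hence
\[
 x^*(\mathcal F_f)\le x^*(\delta(a_f)\setminus M)=\reqdeg{a_f}+\vex{a_f}=1+\vex{a_f}.
\]

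Summing over the $\Missing$ missing matching edges gives
\[
 x^*(\mathcal F)\le \Missing+\sum_f \vex{a_f}.
\]
The vertices $a_f$ are distinct and every singleton excess is nonnegative, so by \eqref{eq:eps-budget}
\[
 \sum_f \vex{a_f}\le\sum_{v}\vex{v}=2\LPex.
\]
This yields the claimed bound $\Missing+2\LPex$.

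I do not expect a real obstacle here. The only points needing care are that each foreign hole is counted once, which follows from bundle disjointness together with the uniqueness of the top endpoint, and that the parent endpoint $a_f$ is matched with required heavy degree exactly $1$. The bound is loose, since it discards the heavy tree edge above $a_f$ and any uplinks from $a_f$ outside $\mathcal F_f$. That looseness is harmless for this lemma.
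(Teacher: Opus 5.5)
Your proposal is correct and follows essentially the same argument as the paper: you group the foreign holes by the missing matching edge $f=p_fc_f$ through which they enter, bound their mass by the heavy degree $1+\vex{p_f}$ of the distinct matched parent endpoint $p_f$, and sum using \eqref{eq:eps-budget}. Your explicit appeal to the disjointness of the packed bundles, which rules out double counting, is a detail the paper leaves implicit.
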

\begin{proof}
For a foreign hole, write its first matching edge as $f=p_f c_f$, where $p_f$ is the parent of $c_f$.  The uplink has top endpoint $p_f$.  All such copies are heavy edges incident to $p_f$, whose total heavy degree is $1+\vex{p_f}$.  The $\Missing$ matching edges missing from the packing have distinct upper endpoints.  Summing and using \eqref{eq:eps-budget} proves the claim.
\end{proof}

\subsection{Gated copies}

\begin{lemma}[gate domination]\label{lem:gate-domination}
If a gated copy $\ell$ has top endpoint $a$ and branch gate $g$, then
\begin{equation}\label{eq:domination}
 x^*_g\ge x^*_\ell.
\end{equation}
Moreover, $g\notin\Packing$.
\end{lemma}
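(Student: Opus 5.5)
The domination inequality \eqref{eq:domination} will come from the DFS priority rule at the moment the search leaves $a$ through $g$; the claim $g\notin\Packing$ will come from the packing property.

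Let $g=ac$ be the branch gate, so $c$ is the child of $a$ on the tree path toward the source $z$ of $\ell$. Consider the moment DFS traversed $g$ from $a$ to $c$. At that moment $z$ was unvisited. Indeed, $z$ is a descendant of $c$, and in a DFS every descendant of $c$ is discovered after $c$, which is discovered by traversing $g$. So the edge $\ell=az$ was an eligible edge at $a$ leading to an unvisited vertex. It is a heavy support edge, since $x^*_\ell>0$ and $\ell\notin M$.

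Now I will check which priority case applied when $g$ was chosen. Since $g$ is heavy, DFS did not take a matching edge at that step. By the rule in \cref{alg:dfs}, this means $a$ had no matching edge to an unvisited vertex, and DFS took a heavy edge of maximum $x^*$-value among the eligible ones. The eligible heavy edges include $\ell$, so $x^*_g\ge x^*_\ell$.

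\textbf{Main subtlety.} Eligibility must be evaluated at the exact moment $g$ was selected. The priority comparison is among edges from the current vertex $a$ to unvisited vertices at that instant, and $z$ is unvisited then by the descendant argument above. Parallel copies and tie-breaking do not affect the comparison, because the rule picks a maximum.

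For $g\notin\Packing$: the uplink $\ell$ covers the tree path from $a$ to $z$. This path contains both $g$, its first edge, and the packed matching edge $e$. The edge $e\ne g$ because $g$ is heavy and $e\in M$. If $g$ were in $\Packing$, the single uplink $\ell$ would cover two packed edges, contradicting the definition of an uplink packing.
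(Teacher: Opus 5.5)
Your proposal is correct and follows essentially the same approach as the paper. Both arguments apply the priority rule at the moment DFS leaves $a$ through $g$: the source $z$ is still unvisited, so $\ell$ is an eligible heavy candidate and $x^*_g\ge x^*_\ell$. Both then observe that $\ell$ covers both $g$ and the packed matching edge, which forces $g\notin\Packing$; your added remarks (that a heavy choice at $a$ means no matching edge was available, and that $e\ne g$) only make explicit details the paper leaves implicit.
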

\begin{proof}
When DFS at $a$ first enters the child branch containing the source $z$ of $\ell$ through the tree edge $g$, the vertex $z$ is still unvisited.  Hence the heavy edge $\ell=az$ is an eligible candidate at that iteration.  The priority rule selects $g$ with maximum $x^*$-value among the eligible heavy edges, proving \eqref{eq:domination}.  Finally, $\ell$ covers both $g$ and its packed matching edge, so the packing property implies $g\notin\Packing$.
\end{proof}

For a top endpoint $a$, let $C_a$ be the total mass of gated copies with top endpoint $a$, and 
let $G_a$ be the total $x^*$-mass of the distinct heavy branch gates used by gated copies with top endpoint $a$.
Distinct top-endpoint--branch pairs give distinct gates, all in $\HeavyTree\setminus \PackH$, so
\begin{equation}\label{eq:G-total}
 \sum_a G_a\le \OutsideMass.
\end{equation}
Define
\[
 O_a:=(C_a-2G_a)_{+},
 \qquad
 \GateOver:=\sum_a O_a.
\]
By \eqref{eq:G-total},
\begin{equation}\label{eq:gate-copy-total}
 \sum_a C_a\le2\OutsideMass+\GateOver.
\end{equation}

The next elementary inequality bounds local overload using the gate residual and singleton excess.

\begin{lemma}[local deficit inequality]\label{lem:local}
Let $k\ge1$ and $y_1,\ldots,y_k\in[0,1]$, and put
\[
 C:=\sum_{i=1}^{k}y_i,
 \qquad
 R:=\sum_{i=1}^{k}(1-y_i)=k-C,
\]
and let $G,H\ge0$.  If
\[
 y_i\le G\quad(i=1,\ldots,k),
 \qquad
 C+G\le1+H,
\]
then
\[
 (C-2G)_{+}\le\frac18R+H.
\]
\end{lemma}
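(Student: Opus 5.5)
The plan is a direct two-variable optimization. Only three facts about the data matter: $C$, $G$, and the lower bound on $R$ that comes from $y_i\le G$. First I dispose of the trivial cases. If $C\le 2G$, the left side is $0$ and there is nothing to prove, because $R,H\ge0$. If $G=0$, then every $y_i=0$, so $C=0$ and we are in the first case. So assume $G>0$ and $C>2G$.

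\emph{Step 1: lower-bound $R$.} Since each $y_i\le G$, we have $C\le kG$, so $k\ge C/G$. Hence
\[
 R=k-C\ge C\Bigl(\frac1G-1\Bigr)=\frac{C(1-G)}{G}.
\]
We may assume $G\le1$: if $G>1$ then $C\le kG$ gives nothing new, but $C\le k$ together with $C+G\le1+H$ yields $C-2G<C+G-1\le H$ directly. Thus it suffices to prove
\[
 C-2G\le\frac{C(1-G)}{8G}+H .
\]

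\emph{Step 2: eliminate $H$.} Use $H\ge\max\{0,\,C+G-1\}$ and split into two cases.

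If $C+G\ge1$, the left side minus $H$ is at most $C-2G-(C+G-1)=1-3G$. On the right, $C\ge1-G$ gives
\[
 \frac{C(1-G)}{8G}\ge\frac{(1-G)^2}{8G}.
\]
So it suffices that $8G(1-3G)\le(1-G)^2$. This rearranges to $25G^2-10G+1\ge0$, that is, $(5G-1)^2\ge0$.

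If $C+G<1$, we need
\[
 C\Bigl(1-\frac{1-G}{8G}\Bigr)\le 2G .
\]
If the coefficient of $C$ is nonpositive, this is immediate. Otherwise the left side is increasing in $C$, so it suffices to check it at $C=1-G$. There it reduces to the same inequality $1-3G\le(1-G)^2/(8G)$ as before.

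\emph{Expected obstacle.} The only delicate point is identifying the extremal configuration: the bound is tight at $G=1/5$, $C=4/5$, $k=4$, $H=0$. That is where the constant $\tfrac18$ comes from, and the perfect square $(5G-1)^2$ is what makes it work. Integrality of $k$ is not needed; the relaxation $k\ge C/G$ already suffices.
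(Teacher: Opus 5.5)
Your proof is correct. It reaches the same extremal configuration as the paper, but eliminates the variables in the opposite order.

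The paper keeps $k$ as the parameter. It rewrites the claim as $9C-16G-8H\le k$, bounds $C$ by $\min\{kG,1-G\}$ (or uses $H\ge C+G-1$), and optimizes over $G$ for each fixed $k$. This lands on $(9k-16)/(k+1)\le k\iff(k-4)^2\ge0$. Because the coefficient $k-24$ in the bound $(k-24)G+8$ changes sign, the case $C+G>1$ needs further sub-cases: $k\le24$, $k\ge24$, and $G\ge1$.

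You instead note that $R=k-C$ is increasing in $k$, so the worst case is the real relaxation $k=C/G$. That leaves an inequality in $(C,G)$ alone. Both of your branches then collapse to the single certificate $(5G-1)^2\ge0$, and the large-$G$ regime is dismissed in one line. In fact your estimate $C-2G<C+G-1\le H$ already works for every $G>1/3$.

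The two certificates are the same object seen in different coordinates. Both proofs end on the boundary curve $C=1-G=kG$, i.e.\ $G=1/(k+1)$, where $(5G-1)^2=(k-4)^2/(k+1)^2$.

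Your version is shorter and makes explicit that integrality of $k$ plays no role. The paper's version displays the extremal number of copies, $k=4$, directly. That count is the combinatorially meaningful quantity when the lemma is applied to the gated copies at a single top endpoint.
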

\begin{proof}
If $C\le2G$, the left side is zero.  Assume $C>2G$.  Since $C\le kG$, we have $G>0$ and $k\ge3$.  After multiplying by eight and rearranging, it suffices to prove
\begin{equation}\label{eq:local-rearranged}
 9C-16G-8H\le k.
\end{equation}
The hypothesis gives $H\ge(C+G-1)_{+}$.

If $C+G\le1$, then $C\le\min\{kG,1-G\}$.  The piecewise-linear expression $9\min\{kG,1-G\}-16G$ is maximized at $G=1/(k+1)$, and hence the left side of \eqref{eq:local-rearranged} is at most
\[
 \frac{9k-16}{k+1}\le k,
\]
where the last inequality is equivalent to $(k-4)^2\ge0$.

Suppose $C+G>1$.  Then $H\ge C+G-1$, and the left side of \eqref{eq:local-rearranged} is at most $C-24G+8$.  If $G\le1$, then $C\le kG$.  For $k\le24$, the inequality $C+G>1$ implies $G>1/(k+1)$, and
\[
 C-24G+8\le\frac{9k-16}{k+1}\le k.
\]
For $k\ge24$, it is at most $(k-24)G+8\le k-16$.  Finally, if $G\ge1$, then $C\le k$ and $C-24G+8\le k-16$.  This proves \eqref{eq:local-rearranged}.
\end{proof}

At a top endpoint $a$, apply the lemma to the values of the gated copies, with $G=G_a$ and
\[
 H := \one_{\{a\text{ unmatched}\}}+\vex{a}.
\]
The domination condition follows from \cref{lem:gate-domination}, and the heavy-degree budget gives $C_a+G_a\le1+H$.

Let $\GateCopies$ be all gated copies and define their incidence residual
\[
 \GateResidual:=\sum_{e\in\GateCopies}(1-x^*_e).
\]
Summing \cref{lem:local}, using \cref{lem:support-volume} and \eqref{eq:eps-budget}, gives
\begin{equation}\label{eq:gate-over-first}
 \GateOver\le\frac18(m+\TreeMass)+(u+2\LPex).
\end{equation}
Using \eqref{eq:tau-upper} in \eqref{eq:gate-over-first},
\begin{equation}\label{eq:gate-over-final}
 \GateOver
 \le\frac18m+\frac1{16}(\Missing+\OutsideMass)+\frac{17}{16}u+\frac{33}{16}\LPex.
\end{equation}

\subsection{The regional ledger}

Every packed matching bundle has mass at least one.  Summing its self-hole, foreign-hole, and gated-copy parts, then using \cref{lem:foreign} and \eqref{eq:gate-copy-total}, gives
\[
 m-\Missing=|\PackM|
 \le \SelfMass+(\Missing+2\LPex)+(2\OutsideMass+\GateOver).
\]
Therefore
\begin{equation}\label{eq:regional-ledger}
 m\le2\Missing+2\OutsideMass+\SelfMass+2\LPex+\GateOver.
\end{equation}
Substituting \eqref{eq:gate-over-final} into \eqref{eq:regional-ledger} and multiplying by sixteen yields
\begin{equation}\label{eq:14m-pre}
 14m\le33(\Missing+\OutsideMass)+16\SelfMass+17u+65\LPex.
\end{equation}

\section{The self-hole cut identity}\label{sec:selfhole}

Self-holes can occur in nested DFS subtrees, so a direct disjoint charging is unavailable.  A two-cut identity shows that their mass is already paid by the exact slack in \cref{sec:slack}.

Fix $e=p_e c_e\in \PackM$, with $p_e$ the parent of $c_e$, and let $T_{c_e}$ be the DFS subtree rooted at $c_e$.  Recall
\[
 \localPackSlack{e}=x^*(\Uplinks(e))-1.
\]
The fundamental cut of $e$ has value
\begin{equation}\label{eq:subtree-cut}
 x^*(\delta(T_{c_e}))=1+x^*(\Uplinks(e))=2+\localPackSlack{e}.
\end{equation}
The singleton cut at $p_e$ has value
\begin{equation}\label{eq:upper-singleton}
 x^*(\delta(\{p_e\}))=2+\vex{p_e}.
\end{equation}
\Cref{fig:selfhole} illustrates the two cuts and the internal mass appearing in this identity.
The positive-support LP mass between $\{p_e\}$ and $T_{c_e}$ consists of the matching tree edge $e$ and the self-hole copies of $\Uplinks(e)$, and therefore equals $1+\localSelfMass{e}$.

Apply the identity
\[
 x(\delta A)+x(\delta B)
 =x(\delta(A\cup B))+2x(E(A,B))
\]
to the disjoint sets $A=\{p_e\}$ and $B=T_{c_e}$.  Using \eqref{eq:subtree-cut} and \eqref{eq:upper-singleton},
\begin{equation}\label{eq:selfhole-identity}
 x^*\!\left(\delta(\{p_e\}\cup T_{c_e})\right)
 =2+\vex{p_e}+\localPackSlack{e}-2\localSelfMass{e}.
\end{equation}

\begin{figure}[t]
\centering
\begin{tikzpicture}[scale=.9,every node/.style={font=\small}]
\node[draw=figblue,rounded corners,fill=figlightblue,minimum width=30mm,minimum height=20mm] (Tv) at (3,0) {$T_{c_e}$};
\node[vnode] (u) at (0,0) {$p_e$};
\node[vnode] (v) at (2.1,0) {$c_e$};
\draw[lightedge] (u)--(v) node[midway,above] {$1$};
\draw[backedge,bend left=28] (Tv.south west) to node[below] {$\localSelfMass{e}$} (u.south);
\draw[decorate,decoration={brace,amplitude=5pt,mirror}] (-.5,-1.5)--(4.5,-1.5) node[midway,below=6pt] {$\{p_e\}\cup T_{c_e}$};
\node[align=left,anchor=west] at (5.1,.15) {$x(\delta(\{p_e\}))=2+\vex{p_e}$\\$x(\delta(T_{c_e}))=2+\localPackSlack{e}$};
\end{tikzpicture}
\caption{The self-hole identity.  The positive-support internal mass between the singleton and the subtree is $1+\localSelfMass{e}$, so every proper self-hole is paid by singleton excess and local packing slack.}
\label{fig:selfhole}
\end{figure}

\begin{lemma}[self-hole theorem]\label{lem:selfhole}
\[
 \SelfMass\le1+\LPex+\frac{\PackSlack}{2}.
\]
\end{lemma}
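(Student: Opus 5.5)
We combine the two-cut identity \eqref{eq:selfhole-identity} with the cut constraint of \eqref{lp:map} for the set $\{p_e\}\cup T_{c_e}$, and then sum over all packed matching edges.

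\emph{Step 1: edges with a proper union.} Fix $e\in\PackM$. If $\{p_e\}\cup T_{c_e}\neq V$, this set is a nonempty proper subset of $V$. Its cut value is therefore at least two, and \eqref{eq:selfhole-identity} yields
\[
 \localSelfMass{e}\le\frac12\bigl(\vex{p_e}+\localPackSlack{e}\bigr).
\]

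\emph{Step 2: the root exception.} The union $\{p_e\}\cup T_{c_e}$ equals $V$ only if $p_e$ is the DFS root and $c_e$ is its unique child. Since $e\in M$ is a tree edge, this can happen for at most one edge $e_0$; in fact the root has at most one matching edge at all. For $e_0$ we cannot use the cut constraint. Instead we bound its self-hole mass directly: every self-hole copy of $e_0$ is a heavy edge incident to $p_{e_0}$, which is matched. Hence
\[
 \localSelfMass{e_0}\le x^*(\delta(p_{e_0})\setminus M)=1+\vex{p_{e_0}}.
\]
One can also use the identity: with $\delta(V)=\emptyset$, \eqref{eq:selfhole-identity} gives $\localSelfMass{e_0}=1+\tfrac12(\vex{p_{e_0}}+\localPackSlack{e_0})$. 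That form is better, because it places $\localPackSlack{e_0}$ in the same form as the other terms.

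\emph{Step 3: summation.} Summing over $e\in\PackM$ gives
\[
 \SelfMass\le 1+\frac12\sum_{e\in\PackM}\vex{p_e}+\frac12\sum_{e\in\PackM}\localPackSlack{e}.
\]
Two facts finish the argument.
\begin{itemize}
\item The parents $p_e$ are distinct vertices, because they are upper endpoints of distinct matching edges. By \eqref{eq:eps-budget}, $\sum_{e}\vex{p_e}\le\sum_v\vex{v}=2\LPex$.
\item By the decomposition \eqref{eq:sigma-decomp}, all slacks $\localPackSlack{f}$ and all outside masses are nonnegative. Hence $\sum_{e\in\PackM}\localPackSlack{e}\le\PackSlack$.
\end{itemize}
Together these give $\SelfMass\le 1+\LPex+\PackSlack/2$.

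\emph{Main obstacle.} The delicate point is the root exception. We must check that at most one packed matching edge can have $\{p_e\}\cup T_{c_e}=V$. We also need the identity to remain valid when $\delta(V)$ is empty, so that the exceptional edge costs only the additive constant $1$ and not a worse term.

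Multigraph subtleties also need care. Parallel copies between $p_e$ and $T_{c_e}$ must be counted in the internal mass $1+\localSelfMass{e}$, as the paper does. There may also be heavy edges parallel to $e$; these are self-holes, since their first tree edge is $e$. Finally, the case $|V|=2$, where $T_{c_e}=\{c_e\}$, must be consistent. I would check these boundary cases separately, as the appendix suggests.
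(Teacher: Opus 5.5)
Your proposal is correct and follows the paper's proof essentially step for step. For every proper union $\{p_e\}\cup T_{c_e}$ you combine cut feasibility with \eqref{eq:selfhole-identity}. For the single root exception you apply the same identity with $\delta(V)=\emptyset$, which gives $\localSelfMass{e}=1+\tfrac12(\vex{p_e}+\localPackSlack{e})$. You then sum, using the distinct upper endpoints with \eqref{eq:eps-budget} and the slack decomposition \eqref{eq:sigma-decomp}.

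Your alternative direct bound $\localSelfMass{e_0}\le 1+\vex{p_{e_0}}$ is unnecessary. It would only suffice after also noting that $\delta(\{p_{e_0}\})=\delta(T_{c_{e_0}})$ forces $\vex{p_{e_0}}=\localPackSlack{e_0}$, so choosing the identity form, as you did, is the right move.
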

\begin{proof}
If $\{p_e\}\cup T_{c_e}$ is a proper subset of $V$, cut feasibility and \eqref{eq:selfhole-identity} give
\[
 2\localSelfMass{e}\le\vex{p_e}+\localPackSlack{e}.
\]
The union can equal $V$ only when $p_e$ is the DFS root and $c_e$ is its unique child.  At most one packed matching edge has this property.  For that exceptional edge, the left side of \eqref{eq:selfhole-identity} is zero, and hence
\[
 \localSelfMass{e}=1+\frac12(\vex{p_e}+\localPackSlack{e}).
\]
Combining the proper-cut and exceptional cases, for every $e\in\PackM$,
\[
 \localSelfMass{e}\le \one_{\{e\text{ exceptional}\}}
 +\frac12(\vex{p_e}+\localPackSlack{e}).
\]
The upper endpoints $p_e$ of packed matching edges are distinct, so
\[
 \sum_{e\in \PackM}\vex{p_e}\le2\LPex.
\]
By the slack decomposition \eqref{eq:sigma-decomp},
\[
 \sum_{e\in \PackM}\localPackSlack{e}\le\PackSlack.
\]
Summing completes the proof.
\end{proof}

\begin{remark}[tight root exception]\label{rem:root-exception}
The additive one in \cref{lem:selfhole} is necessary for this argument.  In the envelope execution of \cref{subsec:envelope}, the packed root edge $ab$ is exceptional and has self-hole mass $\SelfMass=1$, while $\LPex=\PackSlack=0$.
\end{remark}

\section{Proof of the main theorem}\label{sec:main-proof}

Following the outline in \cref{sec:intro}, insert \cref{lem:selfhole} into \eqref{eq:14m-pre}:
\[
\begin{aligned}
 14m
 &\le33(\Missing+\OutsideMass)+16\left(1+\LPex+\frac\PackSlack2\right)+17u+65\LPex\\
 &=33(\Missing+\OutsideMass)+8\PackSlack+17u+81\LPex+16.
\end{aligned}
\]
Since $\Saving=\Missing+\OutsideMass+\PackSlack$ and $\PackSlack\ge0$,
\[
 33(\Missing+\OutsideMass)+8\PackSlack=33\Saving-25\PackSlack\le33\Saving.
\]
Therefore
\begin{equation}\label{eq:saving-lower}
 \Saving\ge\frac{14m-17u-81\LPex-16}{33}.
\end{equation}
Substitute \eqref{eq:saving-lower} into the exact cost identity \eqref{eq:alg-exact}:
\[
\begin{aligned}
 \ALG
 &\le2m+\frac32u+\frac12\LPex-1
 -\frac{14m-17u-81\LPex-16}{66}\\
 &=\frac{59m+58u+57\LPex-25}{33}
 \le\frac{59}{33}(m+u+\LPex)-\frac{25}{33}\\
 &=\frac{59}{33}c^*-\frac{25}{33}.
\end{aligned}
\]
This proves \cref{thm:main}.

\section{Forest augmentation at the minimum LP value}\label{sec:fap-face}

The \emph{Forest Augmentation Problem} (FAP) is the analogous $0/1$-cost problem in which the zero-cost edges form an arbitrary forest.  The \emph{Path Augmentation Problem} (PAP) is the special case in which every component of that forest is a path.  \citet{BDS} observed that their DFS approach extends to PAP when the cut LP value equals the number of free components.  We record the consequence of \cref{thm:main} in a form that also shows that, in this minimum-value regime, every FAP instance is necessarily of this type.

Let $G=(V,F_0\mathbin{\dot\cup}H)$ be a finite loopless FAP multigraph, where $F_0$ is the zero-cost forest and every edge in $H$ has unit cost.  Let $k$ be the number of components of $F_0$.  The standard cut relaxation is \eqref{lp:map} with $M$ replaced by $F_0$ and objective $c(x)=x(H)$.  Write $\operatorname{OPT}_{\mathrm{LP}}(G,F_0)$ for its optimum.  Since free coordinates have zero objective coefficient, they may be normalized to one.

\begin{lemma}[minimum-value structure]\label{lem:fap-face}
Suppose $|V|\ge2$, and let $x$ be a feasible normalized FAP solution, so $x_f=1$ for every $f\in F_0$.  Then
\[
 c(x)\ge k.
\]
If equality holds, every component of $F_0$ is a path (possibly an isolated vertex), and no positive-support heavy edge is incident with an internal vertex of one of these paths.
\end{lemma}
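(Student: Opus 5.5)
The plan is to sum cut constraints over the components of $F_0$ and then read off the equality structure. Let $K_1,\dots,K_k$ be the components of $F_0$, viewed as vertex sets.

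\emph{Case $k\ge2$.} Each $K_j$ is a nonempty proper subset of $V$. Free edges never leave a component, so $\delta(K_j)\subseteq H$ and the cut constraint gives $x(\delta(K_j))\ge2$. Each heavy edge is counted at most twice in $\sum_j x(\delta(K_j))$. Hence
\[
 2c(x)\ge\sum_j x(\delta(K_j))\ge2k .
\]

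\emph{Case $k=1$.} Here $F_0$ is a spanning tree, and we need $c(x)\ge1$. Since $|V|\ge2$, pick a leaf $v$ of $F_0$. Its singleton cut contains exactly one free edge of value one, so the heavy edges at $v$ carry mass at least one.

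This per-vertex count is also the tool for equality, so I will use it uniformly. For every vertex $v$ with free degree $d_v$, the singleton constraint gives
\[
 x(\delta(v)\cap H)\ge(2-d_v)_+ .
\]
So leaves of $F_0$ (free degree one) need heavy degree at least one, isolated vertices need at least two, and internal vertices (free degree at least two) need nothing.

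This gives a second lower bound. Summing heavy degrees,
\[
 2c(x)\ge\sum_v(2-d_v)_+=\sum_{\text{components }K}\Bigl(\sum_{v\in K}(2-d_v)_+\Bigr).
\]
For a tree component with $\ell_K$ leaves (or an isolated vertex, which contributes $2$), the inner sum is $\ell_K\ge2$. Equality $\ell_K=2$ holds exactly when $K$ is a path.

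The two bounds interact cleanly. Suppose $c(x)=k$. Then the sum $\sum_K\ell_K$, with isolated vertices counted as $2$, is at most $2k$. Since each term is at least $2$, every term equals $2$, so every component is a path or an isolated vertex.

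Equality also forces every singleton inequality used above to be tight. The total heavy degree is $2c(x)=2k=\sum_v(2-d_v)_+$, and each vertex satisfies its bound individually, so each vertex's heavy degree equals exactly $(2-d_v)_+$. For an internal path vertex, $d_v=2$, so its heavy degree is zero and no positive-support heavy edge touches it.

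This argument handles $k=1$ as well, where it shows $F_0$ is a Hamiltonian path. It therefore supersedes the two cases of the first bound for the equality part. Only the inequality $c(x)\ge k$ for $k=1$ needed the separate leaf argument, which the degree bound also supplies, since $\ell_K\ge2$ gives $2c(x)\ge2$.

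The main obstacle is bookkeeping rather than depth. Free edges may have parallel heavy copies on the same pair, but these still count as heavy mass, so nothing changes. A singleton component needs heavy degree $2$. I will count its contribution as $2$, matching a path's two endpoints, which keeps the uniform bound $\sum_v(2-d_v)_+\ge2k$. The whole proof is then this single degree-sum inequality together with its equality analysis.
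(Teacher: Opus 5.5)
Your proof is correct and is essentially the paper's argument: both sum the singleton cut constraints and, when $c(x)=k$, read off tightness at every vertex. The paper streamlines this by using the unclipped bound $x(\delta(v)\cap H)\ge 2-d_{F_0}(v)$ together with $\sum_v d_{F_0}(v)=2(n-k)$, which gives the exact identity $\sum_v\bigl(x(\delta(v))-2\bigr)=2(c(x)-k)$ and yields $d_{F_0}(v)\le2$ directly, so neither your leaf count nor your separate component-cut case for $k\ge2$ is needed.
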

\begin{proof}
Write $n=|V|$.  Since $|F_0|=n-k$, summing the singleton constraints gives the exact excess identity
\begin{equation}\label{eq:fap-excess}
 \sum_{v\in V}\bigl(x(\delta(v))-2\bigr)
 =2\bigl(|F_0|+c(x)-n\bigr)
 =2\bigl(c(x)-k\bigr).
\end{equation}
Every summand on the left is nonnegative, proving $c(x)\ge k$.  If $c(x)=k$, every singleton constraint is tight.  Hence, for every vertex $v$,
\[
 d_{F_0}(v)+x(\delta(v)\cap H)=2.
\]
It follows that $d_{F_0}(v)\le2$, so every forest component is a path.  If $v$ is an internal path vertex, then $d_{F_0}(v)=2$ and therefore $x(\delta(v)\cap H)=0$.  Nonnegativity of the coordinates implies that every incident heavy edge has value zero.
\end{proof}

Equation~\eqref{eq:fap-excess} also gives the following stability bound for every normalized feasible point on an instance with at least two vertices:
\begin{equation}\label{eq:fap-stability}
 \sum_{v\in V}\bigl(d_{F_0}(v)-2\bigr)_{+}
 +\sum_{v:\,d_{F_0}(v)=2}x(\delta(v)\cap H)
 \le 2\bigl(c(x)-k\bigr).
\end{equation}
Indeed, the first summand at a vertex of free degree greater than two and the second summand at a vertex of free degree two are each bounded by $x(\delta(v))-2$; the relevant vertex sets are disjoint.

\begin{corollary}[FAP at the minimum LP value]\label{cor:fap-face}
Suppose that the optimum value of the standard cut LP for a FAP instance is $k$, the number of components of its zero-cost forest.  Then, in polynomial time, one can find a 2-edge-connected spanning submultigraph using at most
\[
 \frac{59}{33}k-\frac{25}{33}
 =\frac{59}{33}\operatorname{OPT}_{\mathrm{LP}}-\frac{25}{33}
\]
unit-cost edges.  Consequently, the integrality gap of the cut LP over instances in this minimum-value regime is at most $59/33\approx1.788$.
\end{corollary}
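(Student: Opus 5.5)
The approach is to reduce the FAP instance to a MAP instance by suppressing the free paths, and then apply \cref{thm:main}.

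Take a normalized optimal LP solution $x$ with $c(x)=k$; it exists because free coordinates may be raised to one. By \cref{lem:fap-face}, every component of $F_0$ is a path $P$, and positive-support heavy edges touch only path endpoints. Build a MAP instance $G'$ as follows:
\begin{itemize}
\item an isolated vertex of $F_0$ stays as an unmatched vertex;
\item a path with at least one edge is replaced by its two endpoints $s_P,t_P$ joined by a single zero-cost matching edge;
\item a heavy edge incident to an internal path vertex is deleted, and every other heavy edge is kept with its endpoints unchanged.
\end{itemize}
One obstacle arises if some heavy edge with $x$-value zero is incident to an internal vertex. The restricted algorithm works only on the positive support, so I would simply delete such edges. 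Deleting them keeps $x$ feasible, because they carry no LP mass.

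Next I would show that the restriction $x'$ of $x$ is feasible for the MAP cut LP on $G'$, with $c(x')=k$. Take any cut $S'$ of $G'$ and lift it to a cut $S$ of $G$ by putting all of $P$ on the side of $s_P$ whenever $s_P,t_P$ lie on the same side. When they lie on opposite sides, split $P$ at one edge. The lifted cut crosses at most one free edge per split path, exactly as $S'$ crosses the matching edge $s_Pt_P$. Its heavy edges with positive value are exactly those of $S'$, since every positive heavy edge attaches at an endpoint. Hence $x'(\delta(S'))=x(\delta(S))\ge2$. The cost is unchanged, so $\operatorname{OPT}_{\mathrm{LP}}(G')\le k$. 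Conversely, the MAP instance $G'$ has $m+u=k$, and the bound $c\ge m+u$ from \cref{sec:algorithm} gives $\operatorname{OPT}_{\mathrm{LP}}(G')=k$. If $G'$ has a single vertex, the original instance has $k=1$ and a single isolated vertex; this trivial case, which requires $|V|\ge2$, has to be handled separately. It occurs only when $F_0$ is one path and $|V|\ge2$, and then $G'$ has two vertices joined by the matching edge, so it is not actually degenerate.

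Now run the algorithm on $G'$. By \cref{thm:main}, it returns a 2-edge-connected spanning submultigraph $H'$ with at most $\frac{59}{33}k-\frac{25}{33}$ unit-cost edges. Lift $H'$ back to $G$ by replacing each matching edge $s_Pt_P$ with the whole path $P$ and keeping the heavy edges, which are genuine edges of $G$.

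The remaining step, which I expect to be the main check, is that 2-edge-connectivity survives this lift. A bridge in the lifted graph would be either a heavy edge or a path edge. Contracting each path to its matching edge maps any cut of the lifted graph to a cut of $H'$ that crosses the same heavy edges and at most the corresponding matching edge. It follows that every cut of the lifted graph has size at least two, because each internal path vertex has degree two and paths behave like subdivided edges. Subdividing an edge preserves 2-edge-connectivity, so the lifted graph is 2-edge-connected. Since the LP solution and the construction of $G'$ take polynomial time, the guarantee and the bound on the integrality gap follow.
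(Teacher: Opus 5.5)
Your proposal is correct and follows essentially the same route as the paper: normalize, invoke \cref{lem:fap-face}, suppress each free path to a matching edge, lift cuts path by path (splitting at one edge when the endpoints are separated) to get a feasible $x'$ of cost $k$ on $G'$, apply \cref{thm:main}, and re-expand by subdivision. The one loose end is your remark on the one-vertex case, which contradicts itself as written. That case is simply excluded by the hypothesis, since for $|V|=1$ the LP optimum is $0$ while $k=1$. You should note this before applying \cref{lem:fap-face}, which assumes $|V|\ge2$; once $|V|\ge2$, $G'$ automatically has at least two vertices.
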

\begin{proof}
Choose an optimum LP solution $x$ and normalize every free coordinate to one.  The hypothesis excludes $|V|=1$: in that case $k=1$, while the absence of nonempty proper cuts gives LP optimum zero.  Hence \cref{lem:fap-face} applies, and every component of $F_0$ is a path and positive-support heavy edges are incident only with path endpoints or isolated free vertices.  Delete the zero-valued heavy edges.  For every nontrivial free path, suppress its internal vertices and replace the path by one zero-cost edge joining its endpoints; retain every isolated free vertex.  The resulting zero-cost edges form a matching $M'$ in a loopless multigraph $G'$.  The compressed graph $G'$ has at least two vertices: either a nontrivial free path leaves its two endpoints, or $F_0$ has at least two isolated components.

The heavy coordinates of $x$, together with value one on $M'$, define a feasible solution $x'$ of the MAP cut LP on $G'$.  To verify this, fix a nonempty proper subset $S'\subsetneq V(G')$.  Construct a set $S\subsetneq V$ path by path.  If the two endpoints lie on the same side of $S'$, put the whole path on that side.  If they lie on opposite sides, split the path at one edge, putting the two resulting subpaths on the prescribed sides.  No positive-support heavy edge is incident with a suppressed vertex, so the heavy contribution to the two cuts is identical; a compressed matching edge crosses $S'$ exactly when one original free-path edge crosses $S$.  Therefore
\(
 x'(\delta_{G'}(S'))=x(\delta_G(S))\ge2.
\)
In particular, $c(x')=k$.

Let $m=|M'|$ and let $u$ be the number of unmatched vertices of $G'$.  Each nontrivial free path contributes one matching edge and each isolated free vertex contributes one unmatched vertex, so
\(
 m+u=k.
\)

Conversely, every feasible point of the MAP cut LP on $G'$ can be normalized to value one on $M'$, and summing its singleton constraints gives heavy cost at least $m+u=k$.  Hence the MAP cut LP optimum on $G'$ is exactly $k$.

Apply \cref{thm:main} to $G'$.  It returns a 2-edge-connected spanning submultigraph using at most $59k/33-25/33$ heavy edges.  Finally, replace every matching edge of this solution by its original free path.  Subdividing an edge of a bridgeless connected multigraph preserves bridgelessness and connectivity, so the expanded submultigraph is a feasible FAP solution of the same unit cost.  All steps are polynomial-time.
\end{proof}

Thus the minimum-value PAP consequence noted by \citet{BDS} inherits the full $59/33\approx1.788$ bound.  No separate path-forest assumption is needed: \cref{lem:fap-face} derives the path structure directly from equality in the elementary LP lower bound.

\section{Discussion}\label{sec:discussion}

The proof illustrates three principles that may be useful beyond MAP.

\paragraph{Exploit the exact packing certificate of an exactly solved residual problem.}
The augmentation is solved optimally, so its packing certificate contains more information than the feasible residual LP vector alone.  Retaining the packing slack turns the approximation analysis into an exact cost identity.

\paragraph{Count fractional support relative to what is already integral.}
The fractional support of an extreme cut LP point can be large in absolute terms, but the unit-valued skeleton severely restricts which tight cuts can contribute independent rank.  The minimum-cut dimension theorem (\cref{thm:mincut-dimension}) and the one-bridge quotient argument convert this structure into a support-volume bound.

\paragraph{Search for a cut identity before building a stack analysis.}
Self-holes initially appear to form nested DFS obstructions.  The identity \eqref{eq:selfhole-identity} shows that they are almost completely local: their mass is already paid by singleton excess and packed-cut slack.

The envelope in \cref{subsec:envelope} is also the classical $4/3$ integrality-gap example.  Thus, over all feasible instances, roots, and deterministic tie-breaking orders, the supremum of the LP-relative ratio of this DFS pipeline lies between $4/3\approx1.333$ and $59/33\approx1.788$.  We do not know whether the analysis is tight.  Closing this gap---either by a sharper analysis of the same three-step algorithm or by a different LP-relative algorithm---is a concrete open problem.  Approaching the established integral-optimum frontier of $13/8=1.625$ will likely require additional structure beyond the present regional accounting.

The minimum-value reduction in \cref{sec:fap-face} also isolates the exact point at which FAP departs from MAP: once the LP value exceeds the number of free components, the free forest may branch and positive heavy support may meet internal path vertices.  Equations~\eqref{eq:fap-excess} and \eqref{eq:fap-stability} show that both effects are globally paid by the LP excess $c(x)-k$; converting that budget into an explicit approximation guarantee remains open.
\section*{Acknowledgment}
The authors would like to thank Waldo Gálvez, Ermiya Farokhnejad, and Atrayee Majumder for fruitful discussions.

\section*{Declarations}
\paragraph{Use of generative AI.}

Generative AI tools were used in formulating theorems and proofs, language editing, notation review, literature-search, 
citation preparation, figure production, and preparation of the \LaTeX{} source.  
The authors checked the mathematical content and citations.  
The authors assume responsibility for all content.

\appendix
\section{Technical completeness and boundary cases}\label{app:technical}

For completeness, we state the remaining multigraph conventions and boundary cases.

\paragraph{Support restriction.}
The algorithm augments the DFS tree using only positive non-tree support edges.  By \eqref{eq:fundamental}, these edges fractionally cover every tree edge and hence form a feasible uplink instance.  If one instead allows all original edges, the optimum augmentation can only become cheaper, so the same approximation bound holds.

\paragraph{Parallel copies.}
Parallel copies are separate LP coordinates and separate edge copies in every cut.  If two heavy copies with the same endpoints were both fractional at an extreme point, increasing one and decreasing the other by a sufficiently small common amount would preserve all cuts, the objective, and the bounds.  Thus at most one copy in a parallel class is fractional.  The rank proof does not need this observation, but it confirms that merging parallel fractional columns in the type-zero minimum-cut argument loses no rank.

\paragraph{Gate distinctness.}
A gate is identified by its top endpoint and child branch.  Distinct top--branch pairs give distinct tree edges.  Every gated copy covers both its gate and its packed matching edge, so no gate belongs to $\Packing$.  Hence the total gate mass is bounded by $\OutsideMass=x^*(\HeavyTree\setminus \PackH)$.

\paragraph{Zero-valued edges.}
All bundle identities concern positive support edges.  A zero-valued edge may cross a displayed cut, but it contributes zero and does not affect any equality involving $x^*$.

\paragraph{The root exception.}
The set $\{p_e\}\cup T_{c_e}$ equals $V$ only if $p_e$ is the root and $c_e$ is its unique child.  Since $M$ is a matching, at most one packed matching edge can be exceptional.

\paragraph{The cases $m=0$ and $|V|=2$.}
If $m=0$, the packed-light regional analysis is empty; all displayed inequalities remain valid.  A feasible two-vertex instance is solved optimally, and the additive statement of \cref{thm:main} is valid directly.

% Inline bibliography, in the original alphabetic order.
\begingroup
\setlength{\labelsep}{10pt}

\endgroup

\end{document}